\newsavebox{\@brx}
\newcommand{\llangle}[1][]{\savebox{\@brx}{\(\m@th{#1\langle}\)}%
  \mathopen{\copy\@brx\kern-0.5\wd\@brx\usebox{\@brx}}}
\newcommand{\rrangle}[1][]{\savebox{\@brx}{\(\m@th{#1\rangle}\)}%
  \mathclose{\copy\@brx\kern-0.5\wd\@brx\usebox{\@brx}}}
\newlength{\dhatheight} %
\newtheorem{theorem}{Theorem}[section]
\newtheorem{corollary}[theorem]{Corollary}
\newcommand{\process}{\mathcal{E}}
\newenvironment{proof}[1][Proof]{\begin{trivlist}
\item[\hskip \labelsep {\bfseries #1}]}{\end{trivlist}}
\newcommand{\qed}{\nobreak \ifvmode \relax \else
      \ifdim\lastskip<1.5em \hskip-\lastskip
      \hskip1.5em plus0em minus0.5em \fi \nobreak
      \vrule height0.5em width0.5em depth0.15em\fi}
\newcommand\varpm{\mathbin{\vcenter{\hbox{%
  \oalign{\hfil$\scriptstyle+$\hfil\cr
          \noalign{\kern-.3ex}
          $\scriptscriptstyle({-})$\cr}%
}}}}
\newcommand\varmp{\mathbin{\vcenter{\hbox{%
  \oalign{$\scriptstyle({+})$\cr
          \noalign{\kern-.3ex}
          \hfil$\scriptscriptstyle-$\hfil\cr}%
}}}}
\long\def\/*#1*/{}
\begin{document}
\title{Classical Shadows for Quantum Process Tomography on Near-term Quantum Computers}
 
\author{Ryan Levy}
\thanks{Co-first authors.}
\affiliation{Institute for Condensed Matter Theory and IQUIST and NCSA Center for Artificial Intelligence Innovation and Department of Physics, University of Illinois at Urbana-Champaign, IL 61801, USA} 
\affiliation{Center for Computational Quantum Physics, Flatiron Institute, New York, NY, 10010, USA}
\author{Di Luo}
\thanks{Co-first authors.}
\affiliation{Institute for Condensed Matter Theory and IQUIST and NCSA Center for Artificial Intelligence Innovation and Department of Physics, University of Illinois at Urbana-Champaign, IL 61801, USA} 
\affiliation{The NSF AI Institute for Artificial Intelligence and Fundamental Interactions}
\affiliation{Center for Theoretical Physics, Massachusetts Institute of Technology, Cambridge, MA 02139, USA}
\affiliation{Department of Physics, Harvard University, Cambridge, MA 02138, USA}
\author{Bryan K.  Clark}
\affiliation{Institute for Condensed Matter Theory and IQUIST and NCSA Center for Artificial Intelligence Innovation and Department of Physics, University of Illinois at Urbana-Champaign, IL 61801, USA}

\date{\today}
\begin{abstract}
Quantum process tomography is a powerful tool for understanding quantum channels and characterizing properties of quantum devices. Inspired by recent advances using classical shadows in quantum state tomography [H.-Y. Huang, R. Kueng, and J. Preskill, Nature Physics \textbf{16}, 1050 (2020)],  we have developed ShadowQPT, a classical shadow method for quantum process tomography.  We introduce two related formulations with and without ancilla qubits.  ShadowQPT stochastically reconstructs the Choi matrix of the device allowing for an a-posteri classical evaluation of the device on arbitrary inputs with respect to arbitrary outputs.  Using shadows we then show how to  compute overlaps, generate all $k$-weight reduced processes,  and perform reconstruction via Hamiltonian learning.  These latter two tasks are efficient for large systems as the number` of quantum measurements needed scales only logarithmically with the number of qubits. 
A number of additional approximations and improvements are developed including the use of a pair-factorized Clifford shadow and a series of post-processing techniques which significantly enhance the accuracy for recovering the quantum channel.  We have implemented ShadowQPT using  both Pauli and Clifford measurements on the IonQ trapped ion quantum computer for quantum processes up to $n=4$ qubits and achieved good performance.

\end{abstract}
\maketitle

\section{Introduction}
Quantum technologies have been developing rapidly in recent years. Quantum devices are used for various applications, such as quantum metrology, quantum teleportation, and quantum simulation~\cite{Bouwmeester1997,Giovannetti_2011,Arute2019,Zhong1460,Ebadi_2021}. One important task in the development of near-term quantum devices is their characterization. 

For Noisy Intermediate-Scale Quantum (NISQ) technology, a first step to characterization is to use quantum state tomography (QST) which attempts to characterize the output state of the quantum circuit.
A next crucial step, quantum process tomography (QPT), characterizes not just one output state but the entire quantum dynamics of the device. 
Earlier attempts at QPT used the linear inversion method~\cite{linear,PhysRevLett.78.390}.  Later various statistical methods were developed including maximum likelihood methods~\cite{PhysRevA.64.052312,Bouchard_2019,Lvovsky_2004,PhysRevLett.105.200504,PhysRevLett.108.070502}, Bayesian methods~\cite{Granade_2017,PhysRevLett.102.020504,Blume_Kohout_2010}, compressed sensing methods~\cite{Rodionov_2014}, tensor network methods~\cite{torlai2020quantum} and other optimization techniques~\cite{Knee_2018,surawystepney2021projected,zhang2021quantum,white2021nonmarkovian,Govia2020,xue2021variational,onorati2021fitting}. Theoretically, quantum process tomography can be related to quantum state tomography through the Jamio\l{}kowski process-state isomorphism~\cite{JAMIOLKOWSKI1972275,CHOI1975285}.
One recent important advancement in QST comes from classical shadow tomography~\cite{Huang2020}, which allows for the prediction of multiple observables with few quantum measurements both theoretically and experimentally~\cite{hadfield2021adaptive,chen2021robust,koh2020classical,hillmich2021decision,struchalin2021experimental,zhang2021experimental,zhao2020fermionic,hu2021hamiltoniandriven,acharya2021informationally,rouze2021learning,hadfield2020measurements,huggins2021unbiasing, hu2021classical,hu2021hamiltoniandriven,huang2021provably}.

In this work, we propose a classical shadows algorithm for quantum process tomography on near-term quantum computers giving both a theoretical analysis and explicit implementations on an IonQ quantum computer. Using Jamio\l{}kowski process-state isomorphism and techniques in process tomography, we develop the first classical shadow quantum process tomography (ShadowQPT). 

In Section II, we prove two theorems related to the effectiveness of ShadowQPT.   
Theorem~\ref{thm:overlap} bounds the amount of ShadowQPT data needed to compute the overlap between any target density
matrix $\sigma$ with the output of the measured quantum process on any input $\rho$.
Theorem~\ref{thm:k-qubit} discusses the complexity of process matrix tomography. 
In Section III, we show a corollary~\ref{thm:ham_learn},  which describes the sample complexity scaling of applying ShadowQPT to Hamiltonian learning, which is logarithmic in system size, following from Theorem~\ref{thm:k-qubit}. 
In Section IV, we develop shadow algorithms with both Pauli and Clifford measurements.  Both an ancilla-based scheme as well as a two-sided scheme which applies unitaries both before and after the channel are introduced.   We proceed to show practical improvements to these algorithms for near term devices by using two qubit Clifford unitaries instead a global Clifford and using multiple repetitions per circuit. 
Furthermore, we develop a series of post-processing techniques such as projecting into the space of physical channels and purifying outputs which significantly improve our results. 
In Section V, we implemented our algorithms on IonQ quantum computers for both unitary and non-unitary process tomography on $n=2,3,4$ qubits systems. 
We benchmark both the process matrix construction as well as test predicting the overlap for pairs of input and output pure states and compare to direct measurement outcomes. 
In Section VI, we numerically simulate Hamiltonian learning and reconstruct a random 1D Ising model using ShadowQPT, showing efficient scalability with system size. 
We conclude in Section VII with discussion on opportunities and future explorations of our ShadowQPT algorithms.

\section{Theoretical Analysis of ShadowQPT}

In this work, we have developed shadow tomography algorithms for quantum process tomography. The key idea is based on the recent development of shadow tomography on quantum state~\cite{Huang2020} and the Jamio\l{}kowski process-state isomorphism~\cite{JAMIOLKOWSKI1972275}.  An important result in classical shadow tomography states that~\cite{Huang2020}: Classical shadows of size $N$ suffice to predict $M$ linear functions $\{tr(O_i \rho)\}_{i=1,\dots,M}$ up to additive error $\epsilon$ given that $N \geq$ (order) log($M$) $\max_{i}||O_i||^2_{shadow}/ \epsilon^2$. 

To connect quantum process tomography with the quantum state tomography, we utilize the Jamio\l{}kowski process-state isomorphism~\cite{JAMIOLKOWSKI1972275}. It provides a positive semi-definite operator representation $\Lambda_{\process}$ for a $n$ qubits process $\process$, which is the Choi matrix representation~\cite{CHOI1975285}

\begin{equation}\label{eq:choi}
   \Lambda_{\process} = (\mathds{I} \otimes \process)(\ket{\phi^{+}}\bra{\phi^{+}}^{\otimes n}) 
\end{equation}
where $\ket{\phi^{+}}=\ket{00}+\ket{11}$. For an input density matrix $\rho$ to the channel $\process$, we have $\process(\rho) = \textup{tr}[(\rho^T \otimes \mathds{I}) \Lambda_\process ]$. 
In the following of the paper, we implicitly denote $\Lambda_{\process}$ as $\Lambda$. Notice that $\textup{tr}(\Lambda_{\epsilon})=2^n$ and we further denote the normalized process matrix $\Lambda$ with trace 1 as $\rho_{\Lambda} = \frac{1}{2^n} \Lambda$. We notate a reduced $k$-qubit ($k \leq n$) process Choi matrix as the Choi matrix representation for the quantum channel on a particular $k$-qubit subsystem where the other $n-k$ qubits are traced out.

Based on the classical shadow tomography property and the Choi matrix representation, we state the following two informal versions of the theorems for classical shadow tomography for quantum process.

\begin{theorem}\label{thm:k-qubit}

For a $n$-qubit process Choi matrix $\rho_{\Lambda}$ and $\epsilon,\delta \in (0,1)$, the number of random global Clifford measurements $N$ that suffices to simultaneously predict any reduced $k$-qubit process Choi matrix $\rho_{\Lambda^{(k)}}$ with

 \begin{itemize}
    \item Frobenius norm error up to $\epsilon$ with probability $1-\delta$ is of order $\frac{4^{n+k}}{\epsilon^2} \textup{log}(2 (8n)^{2k} / \delta)$
\end{itemize}

Meanwhile, the number of Pauli-6 POVM measurements $N$ that suffices to simultaneously predict any reduced $k$-qubit process Choi matrix $\rho_{\Lambda^{(k)}}$ with

\begin{itemize}
    \item Frobenius norm error up to $\epsilon$ with probability $1-\delta$ is of order $\frac{36^k}{\epsilon^2} \textup{log}(2 (8n)^{2k} / \delta)$
    \item Trace norm error up to $\epsilon$ with probability $1-\delta$ is of order $\frac{144^k}{\epsilon^2} \textup{log}((24n)^{2k}/\delta)$
\end{itemize}
\end{theorem}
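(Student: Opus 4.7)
The plan is to reduce the process tomography problem to state tomography on the Choi operator $\rho_{\Lambda}$, a $2n$-qubit density matrix, via the Jamio\l{}kowski isomorphism stated in Eq.~\eqref{eq:choi}, and then invoke the classical-shadow prediction bound quoted from Huang et al.\ directly. Once this reduction is in place, a reduced $k$-qubit process Choi matrix $\rho_{\Lambda^{(k)}}$ becomes a $2k$-qubit marginal of $\rho_\Lambda$ obtained by tracing out the other $2(n-k)$ qubits, so each quantity we must estimate is a linear functional of the shadow state.

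My first step would be to express the target error in the Pauli basis, where the shadow formalism produces unbiased estimators term by term. Decomposing $\rho_{\Lambda^{(k)}}-\hat{\rho}_{\Lambda^{(k)}}=\sum_P c_P P/2^k$ over the $16^k$ traceless $2k$-qubit Paulis $P$, I get $\|\cdot\|_F^2=\sum_P |c_P|^2$ with $c_P=\mathrm{tr}[(\rho_{\Lambda^{(k)}}-\hat{\rho}_{\Lambda^{(k)}})P]$, so Frobenius error $\epsilon$ reduces to estimating each of the $16^k$ Pauli expectation values to accuracy $\epsilon/2^k$. The trace-norm statement follows from the dimension-dependent bound $\|\cdot\|_1\leq 2^k\|\cdot\|_F$ on $2k$-qubit operators, which costs an extra $4^k$ in the sample complexity and explains the jump from $36^k$ to $144^k$ for the Pauli-6 scheme.

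Next I would substitute the two ensemble-specific shadow-norm bounds into Huang's theorem. For Pauli-6 POVM, a weight-$w$ Pauli has shadow norm squared at most $3^w$, so any $2k$-qubit Pauli lifted trivially to $2n$ qubits has norm squared at most $9^k$; combined with the per-Pauli precision $\epsilon/2^k$ this produces the $36^k/\epsilon^2$ factor. For global Cliffords on $2n$ qubits, $\|O\|_{\mathrm{shadow}}^2\leq 3\,\mathrm{tr}(O^2)$ yields $O(4^n)$ for a traceless Pauli embedded in the full Choi system, and after the same $4^k$ precision rescaling one obtains the $4^{n+k}/\epsilon^2$ dependence. The logarithmic factor then comes from a union bound over the $16^k$ Paulis on every $k$-qubit subsystem, with the number of subsystems crudely upper-bounded by $(2n)^{2k}$, giving a total count $M\leq(8n)^{2k}$ (respectively $(24n)^{2k}$ for the trace-norm version after absorbing the $3^{2k}$ from nontrivial Pauli indices). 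Feeding $M$ and the shadow norms into the master bound $N\geq O(\log M)\max_i\|O_i\|_{\mathrm{shadow}}^2/\epsilon^2$ reproduces all three stated scalings.

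The main obstacle I anticipate is the ensemble-dependent shadow-norm estimate for observables supported on a proper subsystem of the full $2n$-qubit Choi space. For Pauli-6 this relies on the factorized per-qubit inverse channel, so that unmeasured qubits contribute only identity factors and the norm depends only on the weight of the observable, not on $n$; for the global Clifford ensemble it relies on the representation-theoretic $(d+1)$-factor computation and the embedding of a low-weight Pauli into the full system, where one must be careful that tracing out qubits does not interact pathologically with the inverse measurement channel. Once these locality properties of the two shadow channels are secured, the remaining arguments --- Pauli decomposition, norm rescaling, and a union bound --- are mechanical and yield the stated bounds.
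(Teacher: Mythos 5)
Your proposal is correct, and for the Frobenius-norm statements it is essentially the paper's own argument: expand the reduced Choi matrix in the $16^k$ two-$k$-qubit Pauli operators, note that Frobenius error $\epsilon$ follows from estimating each Pauli coefficient to precision $\epsilon/2^k$ (the source of the extra $4^k$), union-bound over the $\leq 16^k(2n)^{2k}=(8n)^{2k}$ observables, and insert the ensemble-specific shadow norms ($9^k$ for Pauli-6 on a weight-$2k$ Pauli, $3\,\mathrm{tr}(O^2)=3\cdot 4^n$ for global Cliffords), yielding $36^k$ and $4^{n+k}$ respectively. (Your normalization $c_P=\mathrm{tr}[(\rho-\hat\rho)P]$ with the expansion $\sum_P c_P P/2^k$ is off by a factor of $2^k$ — with $\mathrm{tr}(PQ)=4^k\delta_{PQ}$ the consistent choice is $\rho=\frac{1}{4^k}\sum_P \alpha_P P$ with $\alpha_P=\mathrm{tr}(\rho P)$ — but your stated conclusion, per-coefficient accuracy $\epsilon/2^k$, is exactly right, so this is cosmetic.) Where you genuinely diverge is the trace-norm bound: the paper does not derive it from its own Frobenius result but instead cites Lemma~1 of Ref.~\cite{huang2021provably} as a black box, setting $r=2k$ and $p=2n$ to get $\frac{8}{3}\frac{144^k}{\epsilon^2}\log((24n)^{2k}/\delta)$. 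You instead obtain it self-containedly from the dimension bound $\|\cdot\|_1\leq 2^k\|\cdot\|_F$ on $2k$-qubit operators, which tightens the required Frobenius accuracy to $\epsilon/2^k$ and multiplies the sample count by $4^k$, giving the same $36^k\cdot 4^k=144^k$ scaling (your log factor is $\log(2(8n)^{2k}/\delta)$ rather than $\log((24n)^{2k}/\delta)$, identical up to constants inside the logarithm, and your ``absorbing $3^{2k}$'' step to match the paper's constant is unnecessary). Your route has the virtue of not importing an external lemma; the paper's citation buys slightly different constants and a guarantee phrased directly for reduced density matrices.
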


The Pauli-6 POVM measurement includes $\{\frac{1}{3}\ket{0}\bra{0},\frac{1}{3}\ket{1}\bra{1},\frac{1}{3}\ket{+}\bra{+},\frac{1}{3}\ket{-}\bra{-},\frac{1}{3}\ket{r}\bra{r},\frac{1}{3}\ket{l}\bra{l}\}$, where $(\ket{0},\ket{1}),(\ket{+},\ket{-}),(\ket{r},\ket{l})$ are eigenvectors of $\sigma_z$, $\sigma_x$ and $\sigma_y$. We note that the Pauli-6 POVM measurement and the random single qubit Clifford measurement usually share similar results since the two measurements have the same shadow norm for factorized Pauli observables~\cite{Huang2020,acharya2021informationally}. In this work, the term `random Pauli measurement' or `Pauli' in the figures refer to Pauli-6 POVM measurement, though one can also apply random single qubit Clifford in those contexts.

Notice that for any fixed $k$, the shadow methods allow simultaneous prediction of all ${\binom {n}{k}}$ reduced $k$-qubit processes in time logarithmic in $n$.   This is useful for learning quantum dynamics of local observables, and provides a foundation for us to develop an efficient k-local Hamiltonian learning scheme (see Sec. \ref{sec:HamLearnAlg}).
For full process tomography ($k=n$), Theorem~\ref{thm:k-qubit} states that the quantum measurement complexity of ShadowQPT scales exponentially which is consistent with the known lower bound.
We note that each measurement of ShadowQPT gives a full unbiased (potentially noisy) stochastic representation of the entire Choi matrix; this stands in contrast with standard MLE which requires exponential classical post-processing to generate the Choi matrix even from small amounts of data.

Theorem \ref{thm:k-qubit} focuses on predicting the complete set of reduced $k$-qubit process Choi matrices. Instead, one can ask about computing the result of the quantum channel run on a series of (input $\rho$, output $\sigma$) pairs - i.e. the quantum device run on the density matrix $\rho$ and traced against the observable $\sigma$.

\begin{theorem}\label{thm:overlap}
For a $n$-qubit quantum process $\rho_{\Lambda}$ and $\epsilon,\delta \in (0,1)$, given a set of density matrix pairs $\{(\rho^{in}_1,\sigma_1), \dots,(\rho^{in}_{M},\sigma_M)\}$, the number of measurements $N$ that suffices to predict $\textup{tr}((\rho^{in\,T}_i \otimes \sigma_i)\rho_{\Lambda})$ for any $i$ up to error $\epsilon$ with probability $1-\delta$ is of order

\begin{equation}
    \frac{ \textup{log}(2M/\delta)}{\epsilon^2} \textup{max}_{i} ||O_{i} - \frac{\textup{tr}(O_{i})}{2}\mathds{I}||^2_{shadow}
\end{equation}
where $O_{i}=\rho^{in}_i \otimes \sigma_i$,  $|| \cdot ||_{shadow}$ is the shadow norm (see \hyperref[thm:shadow-formal]{Appendix Theorem A.1} for definition). 

For random global Clifford measurement,  it requires order $\frac{\textup{log}(2M)}{\epsilon^2}  \textup{max}_{i} \textup{tr}(O_{i}^2)$. In particular, if $\rho^{in}_i$ and $\sigma_i$ are all pure states, then $\textup{tr}(O_{i}^2)=1$. For random single qubit Clifford measurement, it requires order $\frac{\textup{log}(2M) }{\epsilon^2} \textup{max}_{i} 4^{k_{i}} ||O_{i}||_{\infty}^2$
,where $O_{i}$ acts nontrivially on $k_i$-qubits, $||\cdot||_{\infty}$ is the spectral norm. If $\rho^{in}_i$ and $\sigma_i$ are all pure states, then $||O_{i}||_{\infty}^2=1$.

\end{theorem}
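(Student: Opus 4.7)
The plan is to reduce the statement to the standard classical-shadow prediction bound (\hyperref[thm:shadow-formal]{Appendix Theorem A.1}) applied to the normalized Choi state $\rho_{\Lambda}$. By the Jamio\l{}kowski isomorphism, $\rho_{\Lambda}$ is a bona-fide $2n$-qubit density matrix, and for $O_i := \rho^{in}_i \otimes \sigma_i$ the target quantity $\textup{tr}\bigl((\rho^{in\,T}_i \otimes \sigma_i)\rho_{\Lambda}\bigr) = \textup{tr}(\widetilde{O}_i \rho_{\Lambda})$, with $\widetilde{O}_i := \rho^{in\,T}_i \otimes \sigma_i$, is already a linear functional of $\rho_{\Lambda}$. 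Since partial transposition on the input half preserves the trace, Frobenius norm, spectral norm, and the shadow norm, the shadow-norm factor in the theorem can be written equivalently in terms of $O_i$ rather than $\widetilde{O}_i$.

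With this identification, the first step is to collect $N$ shadow snapshots of $\rho_{\Lambda}$ --- produced by either the ancilla construction or the two-sided pre/post-channel protocol of Section IV --- and form the median-of-means estimator $\hat{o}_i$ of $\textup{tr}(\widetilde{O}_i \rho_{\Lambda})$ for each $i=1,\dots,M$. A union bound over the $M$ observables combined with \hyperref[thm:shadow-formal]{Appendix Theorem A.1} then gives that $N = \mathcal{O}\bigl(\log(2M/\delta)/\epsilon^{2}\bigr)\,\max_i \|O_i - \tfrac{\textup{tr}(O_i)}{2}\mathds{I}\|^{2}_{\text{shadow}}$ snapshots suffice to guarantee $|\hat{o}_i - \textup{tr}(\widetilde{O}_i \rho_{\Lambda})| \le \epsilon$ simultaneously for all $i$ with probability at least $1-\delta$, which is exactly the general bound.

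To obtain the two specialized statements I would plug in the shadow-norm bounds of Huang et al.\ on the $2n$-qubit Choi Hilbert space. For the random global Clifford ensemble, the bound $\|O\|^{2}_{\text{shadow,Cl}} \le 3\,\textup{tr}(O^{2})$ combined with the factorization $\textup{tr}(O_i^{2}) = \textup{tr}((\rho^{in}_i)^{2})\,\textup{tr}(\sigma_i^{2})$ yields the stated $\textup{tr}(O_i^{2})$ scaling, which collapses to $1$ whenever both $\rho^{in}_i$ and $\sigma_i$ are pure. For the random single-qubit Clifford ensemble, the standard locality bound $\|O\|^{2}_{\text{shadow,loc}} \le 4^{k}\|O\|^{2}_{\infty}$ together with $\|\rho^{in}_i\otimes\sigma_i\|_\infty = \|\rho^{in}_i\|_\infty\|\sigma_i\|_\infty = 1$ in the pure case gives the claimed $4^{k_i}\|O_i\|^{2}_{\infty}$ expression, where $k_i$ is the support size of $O_i$ on the $2n$-qubit Choi system.

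The main thing to verify carefully --- rather than the counting itself, which is a direct invocation of the shadow theorem --- is that the ensemble actually implemented in Section IV on the $A$ and $B$ halves of $\rho_{\Lambda}$ (independent Cliffords, or independent single-qubit Cliffords, on each half) induces the same inverse shadow channel on the $2n$-qubit Choi state as the abstract $2n$-qubit ensemble used to derive the norm bounds, so that no extra constants from the doubled Hilbert-space dimension accumulate. This is a bookkeeping step about how the two-sided protocol implements a shadow of $\rho_{\Lambda}$, and once it is in place the rest of the proof is a direct specialization of the known shadow-tomography guarantee.
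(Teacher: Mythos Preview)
Your proposal is correct and follows essentially the same route as the paper: reduce to the $2n$-qubit Choi state $\rho_{\Lambda}$, identify each target quantity as a linear functional $\textup{tr}(O_i\rho_{\Lambda})$, and invoke \hyperref[thm:shadow-formal]{Appendix Theorem A.1} together with the known shadow-norm bounds for the global and single-qubit Clifford ensembles. In fact you are slightly more careful than the paper in distinguishing $O_i=\rho^{in}_i\otimes\sigma_i$ from $\widetilde{O}_i=\rho^{in\,T}_i\otimes\sigma_i$ and noting that the relevant norm bounds are unaffected by the partial transpose; the paper's proof silently conflates the two.
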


If $\sigma_i$ and $\rho_i^{in}$ are $k$-qubit reduced density matrices with the same support, then the overlap between $\sigma_i$ and $\rho_i^{in}$ through the channel corresponds to $\textup{tr}(\process(\rho^{in}_i)\sigma_i)=2^k \textup{tr}(\rho_{\Lambda^{(k)}}(\rho^{in\,T}_i \otimes \sigma_i))$. 
Note that the Clifford ShadowQPT allows us to compute the overlap  of pairs of pure states $(\rho^{in}, \sigma)$ using a number of measurements which scales logarithmically with the number of pairs being considered.  If $\rho^{in}$ and $\sigma$ are stabilizer states, the classical resources required to compute are also efficient as one can use the stabilizer algorithm~\cite{Aaronson_2004}. 

Notice to compute $M$ pairs of $\textup{tr}(\process(\rho^{in}_i)\sigma_i)$, ShadowQPT requires time independent of $n$ and scales as $\textup{log}(M)$, when  $\rho^{in}_i$ and $\sigma_i$ have support over a $k$-qubit system.
The detailed proofs for  Theorems~\ref{thm:k-qubit} and ~\ref{thm:overlap} and are given in Appendix~\ref{app:proof}.

\section{Hamiltonian Learning via ShadowQPT}
\label{sec:HamLearnAlg}

A corollary of Theorem~\ref{thm:k-qubit} implies that certain applications of ShadowQPT can also be efficient. One such application is Hamiltonian learning \cite{Li2020,haah2021optimal}, which seeks to determine an unknown Hamiltonian given access to a time evolution operator $U(t)$.
 In this context, we consider applications of ShadowQPT to Hamiltonian learning. For a $k$-local Hamiltonian $H = \sum_i c_i h_i$,
$h_i$ is a $k$-body operator as a Pauli string and $c_i$ is the coefficient to learn.

 One scheme that we develop is to learn $c_i$ through ShadowQPT on $e^{-iHt}$.  
Notice that access to $e^{-iHt}$ also gives us an ability to time evolve to all $t'$ which are integer multiples of $t$.
Consider $t$ sufficiently small such that $U = e^{-iHt} \approx \mathrm{I} - iHt = \mathrm{I}-it\sum_i c_i h_i$, then $U$ can be approximated by a sum of $k$-local operators.  
To characterize $U$,
we first perform ShadowQPT on $U$ and attain a representation of $\rho_{\Lambda^{(k)}}$. To learn $c_i$,we set $\Lambda(\rho) = U \rho U^\dagger$ and we choose a $k'$-local operator $p_i$ where $k'<k$ such that $q_i \equiv [h_i,p_i] \neq 0$, and then classically evaluate

\begin{align}
  \text{tr}(\Lambda(p_i) q_i) &= \text{tr}((p_i - it [H,p_i]) q_i) + O(t^2) \\
  &= \text{tr}((p_i - it \sum_j  c_j q_j) q_i) + O(t^2); \\
 \text{tr}((p_i^T \otimes q_i)\rho_{\Lambda}) &= -it (c_i + \sum_{j\not= i} g_j c_j) + O(t^2)
\end{align}

where $g_j= \text{tr}(q_j q_i)/\text{tr}(q_i q_i) = \text{tr}(q_j q_i)/2^n$. Hence we can attain $c_i$ from evaluating data from ShadowQPT and solving a system of equations. Note that if $\text{tr}(q_j q_i)=\delta_{ij}$ $\forall i,j$ then $\text{tr}((p_i^T \otimes q_i)\rho_{\Lambda}) = -it c_i + O(t^2)$. We focus on this case, which is true for e.g. the transverse field Ising model.
According to Theorem.~\ref{thm:k-qubit}, the sampling complexity $S$ of estimating $c_i$ is given by the following.

\begin{corollary}\label{thm:ham_learn}
Consider an unknown $k$ local Hamiltonian $H=\sum_i c_i h_i$ where $h_i$ is a $k$-body operator and $c_i$ are real coefficients. Given access to  $U(t)=e^{iHt}$, then ShadowQPT can estimate $c_i$ up to error $\epsilon$ with probability $1-\delta$ with sample complexity $S$
\begin{equation}
    S \sim \frac{36^k}{t^2 \epsilon^2} 2k \log(8n) \log (1/\delta),
\end{equation}
where $t$ is sufficiently small that the linear approximation error is much smaller than $\epsilon$. 
\end{corollary}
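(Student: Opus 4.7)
The plan is to reduce Corollary~\ref{thm:ham_learn} to a single application of Theorem~\ref{thm:k-qubit} by showing that each coefficient $c_i$ can be read off as a single linear functional of the reduced $k$-qubit process Choi matrix, then calibrating the accuracy we need from the Choi reconstruction so that the extracted $c_i$ has error at most $\epsilon$. The three ingredients I would combine are: (i) the first-order Taylor expansion of $U(t)$; (ii) a choice of probe Pauli string $p_i$ that isolates $c_i$ from all other $c_j$; and (iii) the Pauli-6 Frobenius-norm sample bound of Theorem~\ref{thm:k-qubit}.

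First I would expand, for small $t$, $U(t)=I-itH+O(t^2)$ and compute the Heisenberg-evolved probe $U p_i U^{\dagger}=p_i-it\sum_j c_j[h_j,p_i]+O(t^2)$. Next I would select, for each index $i$, a Pauli string $p_i$ of weight $k'<k$ such that $q_i:=[h_i,p_i]\neq 0$, $\mathrm{tr}(p_i q_i)=0$, and for every $j\neq i$ either $[h_j,p_i]=0$ or $\mathrm{tr}([h_j,p_i]q_i)=0$. Since all $h_j$ are Pauli strings, every commutator is (up to an $i$-phase) another Pauli string, and this Hilbert--Schmidt orthogonality can generically be arranged by placing $p_i$ on a site where $h_i$ distinguishes itself from its neighbors; in degenerate cases one solves a small linear system over a constant-size family of probes. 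With such a $p_i$, the pairing reduces to $\mathrm{tr}(\mathcal{E}(p_i)q_i)\approx -it\, c_i\,\mathrm{tr}(q_i^2)$, and via the Jamio\l{}kowski isomorphism this becomes $\mathrm{tr}\bigl((p_i^T\otimes q_i)\,\rho_\Lambda\bigr)\approx -it c_i$ up to a constant. Because $p_i^T\otimes q_i$ is supported on only the $2k$ qubits of the Choi subsystem associated with the support of $h_i$, it is a linear functional of the reduced $k$-qubit Choi matrix $\rho_{\Lambda^{(k)}}$.

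Having exhibited each $c_i$ as a bounded linear functional of $\rho_{\Lambda^{(k)}}$, I would invoke Theorem~\ref{thm:k-qubit} in its Pauli-6 Frobenius form: $N\sim (36^k/\epsilon_F^{2})\log(2(8n)^{2k}/\delta)$ Pauli-6 shots suffice to reconstruct every reduced $k$-qubit Choi matrix to Frobenius error $\epsilon_F$. Cauchy--Schwarz against the unit-normalized Pauli observable $p_i^T\otimes q_i$ converts this into error $O(\epsilon_F)$ on the extracted inner product, so setting the accuracy of the inner product to $t\epsilon$ and absorbing $k$-dependent operator-norm factors into the $36^k$ prefactor yields the claimed $\epsilon$-accuracy on $c_i$. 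Splitting the logarithm as $\log(2(8n)^{2k}/\delta)=\log(2/\delta)+2k\log(8n)$ and bounding it by $2k\log(8n)\log(1/\delta)$ gives the stated sample complexity $S\sim (36^k/t^2\epsilon^{2})\,2k\log(8n)\log(1/\delta)$. A last check is that the discarded $O(t^2)$ term in the Taylor expansion is much smaller than $\epsilon$, which is the hypothesis of the corollary.

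The main obstacle I expect is the probe-selection step: guaranteeing that the family $\{p_i\}$ isolates each $c_i$ cleanly, with no cross-coupling from the remaining $h_j$, and that any ancillary inversion of a small linear system has an $O(1)$ condition number so the $\epsilon^{-2}$ scaling survives. Everything after that is a routine invocation of Theorem~\ref{thm:k-qubit}; the subtle part is arguing that this local algebraic isolation is always achievable for generic $k$-local Pauli Hamiltonians.
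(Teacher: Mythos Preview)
Your proposal is correct and follows essentially the same route as the paper: expand $U(t)$ to first order, extract $c_i$ via the probe/commutator trick $\mathrm{tr}(\rho_{\Lambda}(p_i^{T}\otimes q_i))\approx -it\,c_i$, and then invoke the Pauli-6 Frobenius bound of Theorem~\ref{thm:k-qubit} with accuracy rescaled by $t$. The paper's argument is only the short sketch preceding the corollary and does not address the cross-coupling issue you flag; it simply assumes (and verifies in the Ising example of Section~VI) that $p_i$ can be chosen so that $\mathrm{tr}([h_j,p_i]\,q_i)=0$ for $j\neq i$, so your concern about probe selection is a genuine detail the paper leaves implicit rather than a deviation from its approach.
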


Thus for fixed $k$, ShadowQPT can efficiently predict $c_i$ in $O(\log(n))$ samples.
Note that one can systematically reduce the (currently linear) approximation error by a higher order Taylor expansion at the cost of additional classical processing.  We also note that our approach is related to the latest version of Ref.~\cite{haah2021optimal} on real-time dynamics of Hamiltonian learning, which is proved in an alternative way by extending the results of finite-temperature Hamiltonian learning.

\section{ShadowQPT Post-processing Algorithms}
\newcommand*{\lamGate}{\mathlarger{\mathlarger{\mathlarger{ \Lambda}}}}
\newcommand*{\UGate}{\mathlarger{\mathlarger{\mathlarger{U}}}}
\begin{figure}[h]
    \centering
    \subfloat[]{
        \includegraphics[width=0.95\columnwidth]{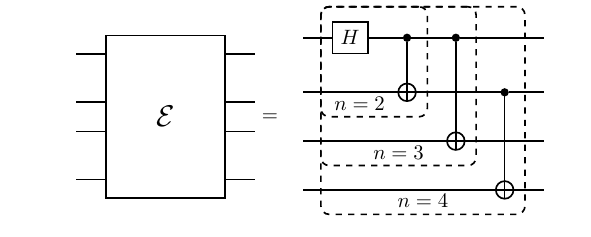}
    \label{fig:process}
    }\\
     \subfloat[]{
    \includegraphics[width=0.95\columnwidth]{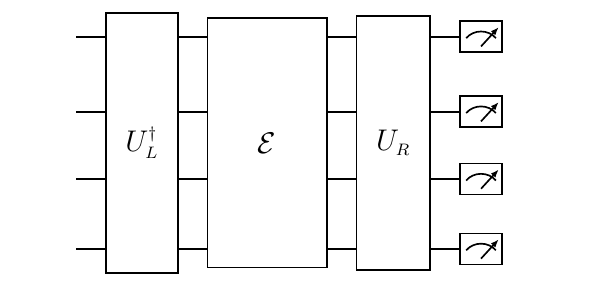}
    \label{fig:two_sided}
    }\\
    \subfloat[]{
    \includegraphics[width=0.9\columnwidth]{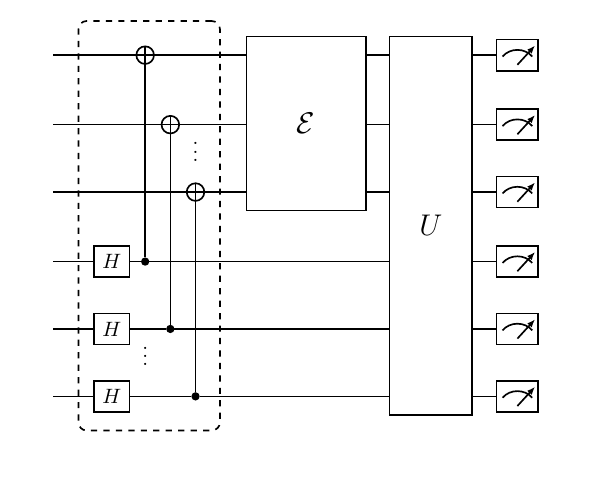}
    \label{fig:choi}
    }

    \caption{(a) Unitary channel $\process$, which is the GHZ process that can create a GHZ state for $n$ qubits from $\ket{0}^{\otimes n}$.
    (b) Circuit for two-sided quantum process tomography to produce a Choi matrix $\Lambda$ of a quantum channel $\process$. The application of two unitary circuits, denoted $U_L^\dagger$ and $U_R$ (e.g. random Pauli/Clifford) can be applied to reconstruct the channel.
    (c) Circuit for ancilla-based quantum process tomography to produce a Choi matrix $\Lambda$ of a quantum channel $\process$. After preparation of a bell state input (dashed box) to $\process$, a unitary rotation $U$ (e.g. random Pauli/Clifford) is applied before measurement. 
   }
    
\end{figure}
In this work, we perform classical shadow process tomography algorithms on a unitary process as well as non-unitary reduced process. We exemplify our approach on the unitary channel $\Lambda$ which is the GHZ process and its reduced process. The GHZ process $\Lambda$ is shown in Fig.~\ref{fig:process} and it generates a $n$-qubit GHZ state \cite{GHZ} $|\textrm{GHZ}_n\rangle = (|0\rangle^{\otimes n} + |1\rangle^{\otimes n})/\sqrt{2}$ from $\ket{0}^{\otimes n}$. The GHZ processes are constructed such that the circuits have a depth of $2,3,3$ for $n=2,3,4$ qubits respectively \cite{Cruz2018}.

\subsection{Unitary full process classical shadows}

In this section, we are going to introduce two schemes for ShadowQPT, the two-sided scheme and the ancilla-based scheme. 
The two sided scheme is shown in Fig.~\ref{fig:two_sided}, where random unitary circuits $U^\dagger_L$ and $U_R$ are applied to the left and right hand side of the channel.  For the ancilla-based scheme, we use the circuit shown in Fig.~\ref{fig:choi}.  
This circuit produces an $2n$ qubit output density matrix which represents the $n$ qubit normalized Choi matrix $\rho_\Lambda$, for the $n$ qubit process $\process$.
This approach is often called ancilla-assisted quantum process tomography~\cite{PhysRevLett.90.193601} (AAPT). 
In our work, for the ancilla-based approach we perform shadow tomography on this $2n$ qubit state. 
It turns out that the two-sided scheme can be viewed as a special case of the ancilla-based scheme, where the random unitary in the ancilla-based is factorized into a product of two unitaries $U_R$ and $U_L$ which respectively act on the top and bottom $n$ wires each. The details for the proof  of equivalence  can be found in Appendix.~\ref{app:equivalence}.

Both schemes can realize a random Pauli measurement scheme. 
For the ancilla-based scheme, the Pauli-6 POVM measurement can be realized by randomly applying a series of basis rotation gates chosen from $G_{R}=\{I, H,SH\}$ as the unitary $U$ after the channel and measuring in the computational basis, where $H$ is the Hadamard gate, $S$ is the phase gate and $I$ is the identity. 
For the two-sided scheme, one can realize the Pauli-6 POVM measurements by randomly selecting $U_L,U_R$ as a tensor product from $G=\{I, H, SH, X, HX, SHX\}$ where $X$ is the Pauli X gate.
In practice, this selection from $G$ only needs to be done for $U_L$ and for $U_R$ we can still randomly draw from $G_R$, just as the ancilla-based scheme, since the application of the last $X$ is irrelevant if a measurement is immediately performed afterwards. 
It is also possible to replace the Pauli-6 POVM measurements with random single qubit Clifford gates ($k=1$). This may be advantageous on certain quantum devices \cite{zhang2021experimental}. We numerically simulate their equivalence in Appendix Fig.~\ref{fig:two_sided_plots_l2}.

Additionally, both schemes can utilize random Clifford measurements, which can have better performance or scaling compared to Pauli measurements. Since most NISQ devices have limited gate depth, a global $n$ or $2n$ qubit random Clifford unitary needs gate depth $O(n^2/\log n)$ in general, which could be challenging to implement accurately; instead, we use a tensor product of 2-qubit Clifford unitaries ($k=2$) to realize the improved scaling of the Clifford group while minimizing the additional gate depth to the circuit. Random Clifford circuits are generated via the Qiskit library \cite{Qiskit}, which uses an optimal number of Control-Not gates for 2 qubits \cite{BravyiClifford}.

\begin{figure*}
    \centering
    \includegraphics[height=4.8cm]{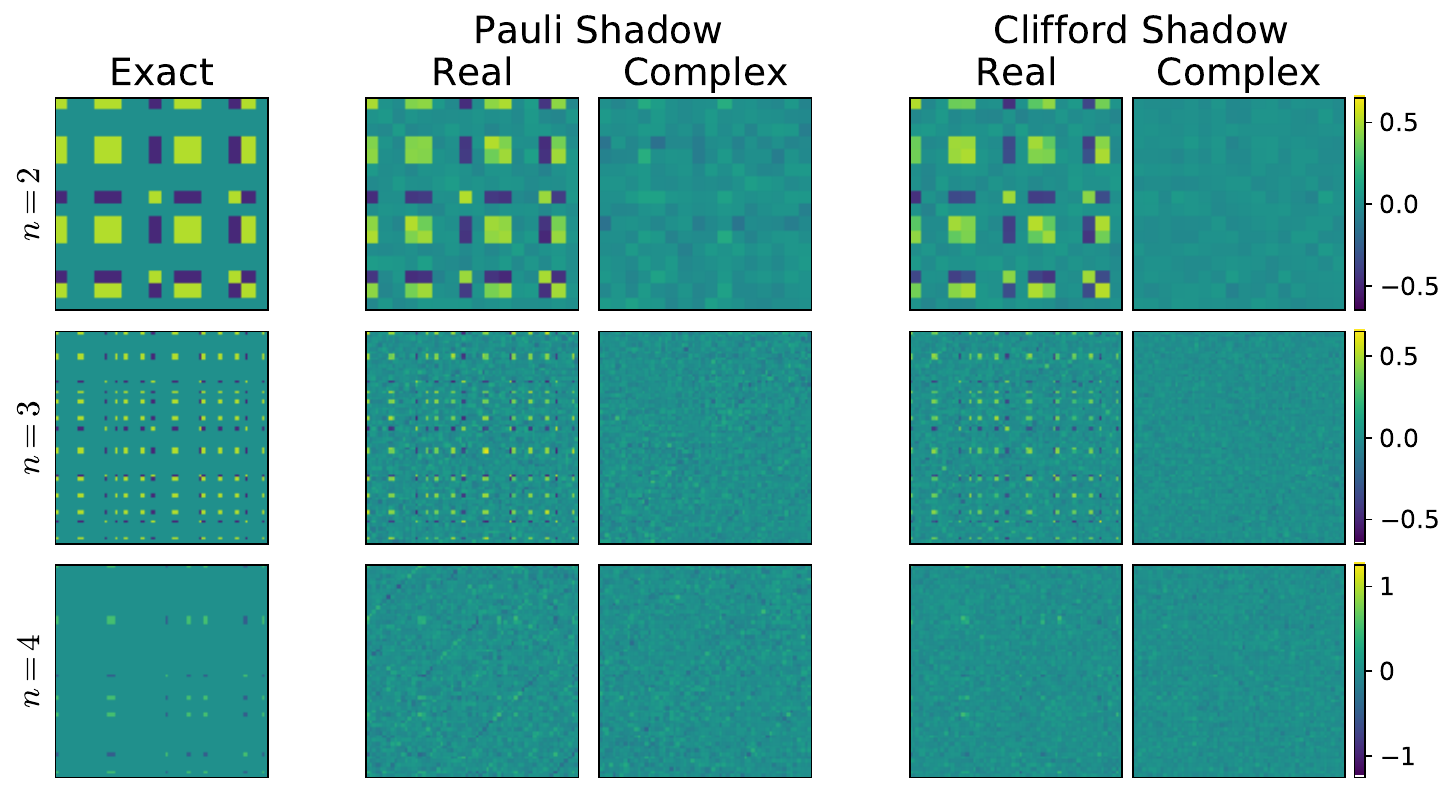}
    \includegraphics[height=4.8cm,trim={5cm 0 0 0},clip]{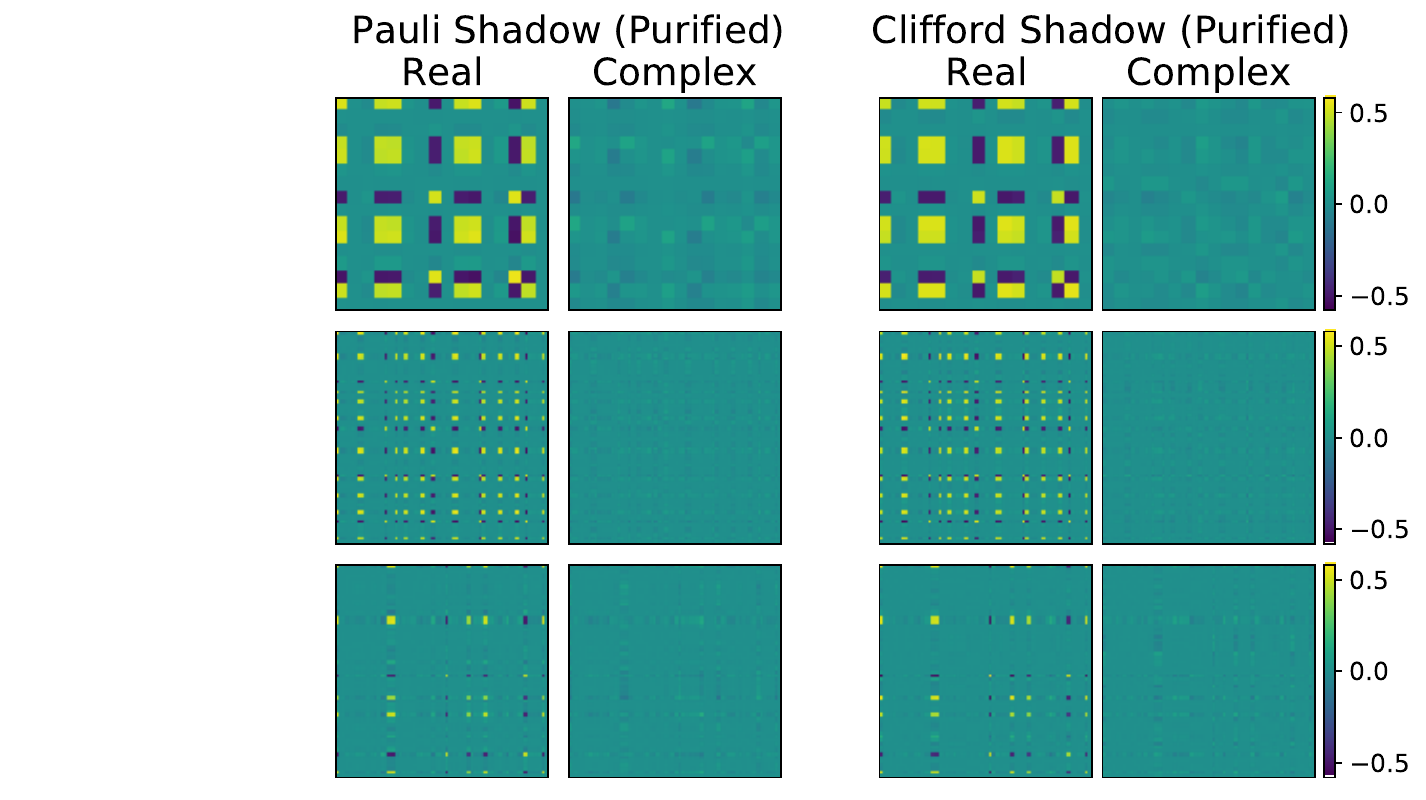}
    \caption{Visualization of the GHZ process Choi matrix, for various numbers of qubits (rows) constructed exactly or via classical shadow quantum process tomography. The first column corresponds to the exact unitary process $\Lambda$, which has no imaginary component. The next columns correspond to real and imaginary parts of the Pauli/Clifford shadow reconstructed Choi matrix $\Lambda^O$ followed by their corresponding purification. The $N=4$ visualization is zoomed into the first $60 \times 60$ elements.  }
    \label{fig:process_visualization}
\end{figure*}

After we attain the random Pauli/Clifford measurements, we can construct a Choi matrix from the classical shadows (denoted \textit{Pauli} or \textit{Clifford} in figures). Precisely, after applying a Unitary $U$ and measuring in the computational basis a bitstring $|b_i\rangle \in \{0,1\}^{\otimes n}$, we form the $i$-th classical shadow $\hat{\Lambda}_i$ as 
\begin{align}
    \hat{\Lambda}_i = 2^n \mathcal{M}^{-1}\left(U_i^\dagger|b_i\rangle\langle b_i|U_i\right), \label{eq:shadow}
\end{align}
where the inverse channel is $\mathcal{M}_n^{-1} (X) = (2^n+1)X-\mathds{I}$ for $n$-qubit Clifford circuits or $\mathcal{M}_n^{-1} (X) = \otimes_i \mathcal{M}_k^{-1}(X)$ for Pauli ($k=1)$ and 2-qubit Clifford ($k=2$) circuits. These objects can be held in memory efficiently, as Clifford circuits and bitstrings have a polynomial representation as well as polynomial evaluation cost~\cite{Aaronson_2004}.  We average over all classical shadows  $\hat{\Lambda}_i$ to obtain the final shadow reconstructed Choi matrix $\Lambda^O$. Note that it is suggested in Ref.~\cite{Huang2020} to utilize a median-of-means rather than direct mean. Our data has little dependence on this difference after projection (see Supplementary Fig.~\ref{fig:median_vs_mean}), which agrees with the observations of Ref.~\cite{struchalin2021experimental}.

We add an additional extension in which a given unitary is repeatedly measured under some number of repetitions. In practice, it is often the case that loading a different unitary on a quantum computer is significantly slower than taking additional measurements over the same unitary. The above theorems do not directly apply in this case, but we show via simulations (see supplementary Fig.~\ref{fig:trace_distance_vs_shots}) that including additional repetitions improves the quality of the result, up to a saturation threshold.

\subsection{Non-unitary reduced process classical shadows}

An interesting additional extension is the ability to characterize a non-unitary quantum process. By separating the $n$ qubit process into subsystems of size 1 and up to size $n-1$ , we can produce a reduced process on a subsystem of up to $n-1$ qubits. All subsets of the GHZ process are then non-unitary. Note that a general process $\Lambda_\process$ need not be unitary, nor have all subsystems be non-unitary. In this work, we focus on a unitary GHZ process $\Lambda$ and simultaneously measuring all corresponding non-unitary reduced processes on each subsystems. 

In a Pauli measurement based tomography scheme, including Pauli shadow tomography, we can directly use the operators that act over the relevant subsystem. Here, the partial trace over the Choi matrix is equivalent to this direct scheme, although for many expensive post-processing schemes it is far easier to optimize in the subsystem space directly.

For a Clifford shadows scheme however, a general $n$-qubit Clifford (or $2n$) will not be separable in the same manner as a Pauli based scheme. Despite this, if a particular subsystem or set of subsystems is known before hand, the Clifford unitaries can be chosen to respect separability. It has not been well studied, however, the trade off between the benefits of using non-separable large $n$-qubit Clifford gates or using separable ones.

\subsection{Post-processing of classical shadows}

A physical quantum process $\Lambda$ is a completely positive trace-preserving ($\mathcal{CPTP}$) map between density matrices. For any positive semi-definite density matrix $\rho^{in}$ with trace one, $\process(\rho^{in})$ should also be positive semi-definite with unit trace. Even though the shadow reconstructed Choi matrix in Eq.~\ref{eq:choi} has the correct trace, it is not a valid $\mathcal{CPTP}$ map in general. We further explore various post-processing projection techniques for improving the classical shadow reconstruction in this section. 

We denote the projection of $\Lambda^O$ as $\Lambda^\prime$. To project $\Lambda^O$ into the space of completely positive preserving ($\mathcal{CP}$) matrices, we use the technique of Ref.~\cite{Smolin2012}, which removes negative eigenvalues and rescales the remaining non-negative eigenvalues. We find this operation has the biggest reduction of trace distance from the target density matrix in our data. 
This does not guarantee that $\Lambda^\prime$ is $\mathcal{CPTP}$, but ensures that the resulting $\Lambda'$ is normalized and positive semi-definite.

In cases where we know the true underlying process should be unitary, we can  remove extraneous statistical noise, by purifying the shadow reconstructions to create a unitary shadow Choi matrix. Purifying in this manner can also be seen as an extreme $\mathcal{CP}$-projection method, in which all but the dominant eigenvector are thrown out. We find this approach significantly improves our result in the case where the target process is indeed unitary.

\subsection{Predicting Overlaps}

Given a set of density matrix pairs $(\rho^{in}_i,\sigma_i)$ as in Theorem~\ref{thm:overlap}, we can utilize shadow reconstructed Choi matrices to obtain predictions of the overlap $\textup{tr}[\process(\rho_i^{in})\sigma_i]$;  we work with the pure state $\sigma_i(\theta) = U(\bm{\theta})^\dagger|0\rangle$  
where $\theta$ is a set of parameters which specify the quantum circuit which generates $U$ and we use $|0\rangle$ as a placeholder for $|0\rangle^{\otimes n}$.

When evolving an initial density matrix $\rho^{in}$ through the process generated by $\Lambda^O$ to produce $\rho^{out}$, we additionally project a $\mathcal{CP}$-projected Choi matrix into the trace preserving ($\mathcal{TP}$) space using the method outlined in Ref.~\cite{knee2018quantum}, followed by projecting $\rho^{out}$ into $\mathcal{CP}$ space. Precisely, we have $\Lambda^\prime = \mathcal{TP}(\mathcal{CP}[\Lambda^O])$ and $ \rho^{out}=\mathcal{CP}\left[ \textup{tr}( ( \rho^{in\, T} \otimes  \mathds{I})) \Lambda^{\prime}\right]$. 

We then take the pure state density matrix $\sigma_i$ to calculate the overlap $\textup{tr}[\rho^{out}_i\sigma_i] = \langle 0|U(\bm{\theta}) \rho^{\prime out}_i U(\bm{\theta})^\dagger|0\rangle$.   For this to be classically efficient (even without postprocessing),  $\sigma_i$ has to be representable either as a short tensor network for Pauli or be within the Clifford group for Clifford unitaries respectively. 

When scaling to large numbers of qubits, the post-processing required to generate $\Lambda^\prime$ becomes infeasible, so we introduce a more efficient routine. To compute the overlap, we consider $\textup{tr}[ (\rho_i^{in\,T}\otimes\sigma_i) \,\Lambda'^O]$ where $\Lambda'^O$ here represents a purified $\Lambda^O$ without any additional projection.

\section{Experimental Results of ShadowQPT on Quantum Hardware}\label{sec:ExpRes}

We exemplify our ShadowQPT methods via quantum circuit measurements on the IonQ trapped ion quantum device. 
The IonQ system consists of 11 trapped ion qubits \cite{wright2019benchmarking}, available via Amazon Braket on Amazon Web Services (AWS). At the time of simulation, the machine was reported\footnote{This information can be found via the AWS status page} to have a 1 qubit gate average fidelity of 0.99717, 2 qubit gate average fidelity of 0.9696, and a state preparation and measurement (SPAM) mean of 0.9961, which agrees with the data presented in Ref.~\cite{wright2019benchmarking}. We use 51200 shadows each for $n=2,3,4$ in the experiments with at most 1024 unitary circuits; see Appendix.~\ref{app:shots}.

Included against our ShadowQPT results is a maximum-likelihood estimation (MLE) for reconstructing the Choi matrix using our random Pauli measurements. We use the iterative method outlined in Ref.~\cite{Lvovsky_2004}. Due to the size of the Clifford POVM, Clifford-MLE reconstructions are done within the space of observed measurements, as is $n=4$ Pauli data. 
Notice that the MLE method is classically exponentially costly in general. 
To perform MLE, we begin with a random complex matrix, taking at least 100 iteration steps such that the iteration has converged, i.e. $\Vert \rho_{k} - \rho_{k-1}\Vert \leq 10^{-3}$ or that the $L_2$-norm of the difference between steps has reached a small value. We include purification of the resulting Choi matrix alongside our shadow reconstructions as well.

We explore the effect of using `fixed' and `non-fixed' schemes by using either a number of random Clifford unitaries or a set of random Clifford unitaries that are separable between the first qubit and the rest of the system. For a `non-fixed' unitary, a two-qubit Clifford unitary may be applied to any two wires of the circuit. For the `fixed' unitary, one Clifford is applied to qubits 1 and $N+1$ (its corresponding ancilla), and the remaining Clifford unitaries are randomly applied in the `non-fixed' scheme.  Our data has 0\%, 50\%, and 43.75\% of the Clifford unitaries  manually fixed for $n=2,3,4$ respectively. 
These various schemes alternatively target respectively a situation where one might ask about any reduced density matrix \textit{a posteriori} (i.e. the non-fixed scheme) or target density matrices on a particular fixed set of qubits (i.e the fixed scheme).

\begin{figure}
    \centering
    \includegraphics[width=\columnwidth]{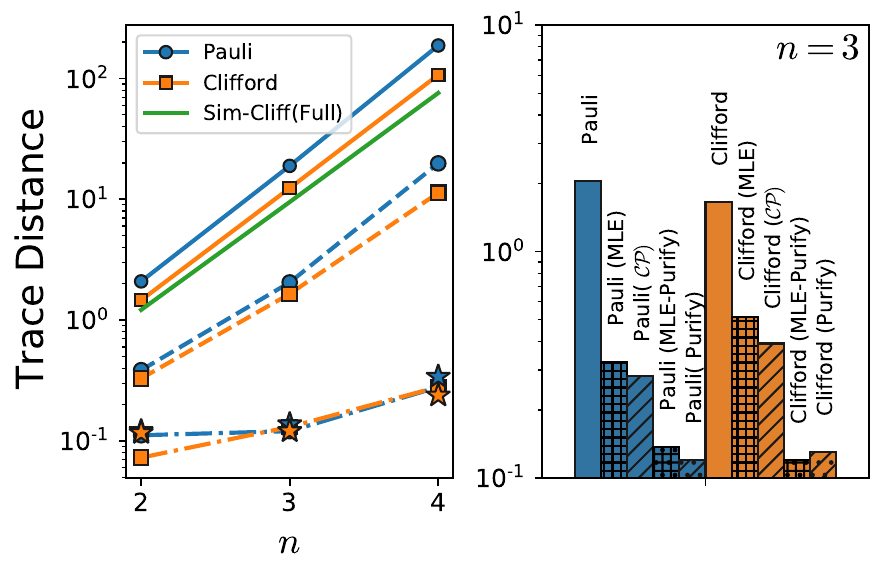}
    \caption{Left: Normalized trace distance $T(\Lambda,\Lambda^O)$ between the unitary Choi matrix $\Lambda$ and a Pauli/Clifford shadow reconstructed Choi matrix $\Lambda^O$. Solid lines represent single repetition shadows with 512 unitaries on IonQ for Pauli/Clifford measurements and in green, simulated $2n$-Clifford circuits; dotted lines includes all data/repetitions collected; dot-dashed lines are purified. The corresponding purified MLE results are shown in blue/orange stars for Pauli/Clifford measurements respectively.
    Right: Trace distance for various post-processing for $n=3$. The projection/post-processing method is shown in parenthesis above the bar. }
    \label{fig:full_trace_distance}
\end{figure}

\begin{figure*}
    \centering
    \includegraphics[width=2\columnwidth]{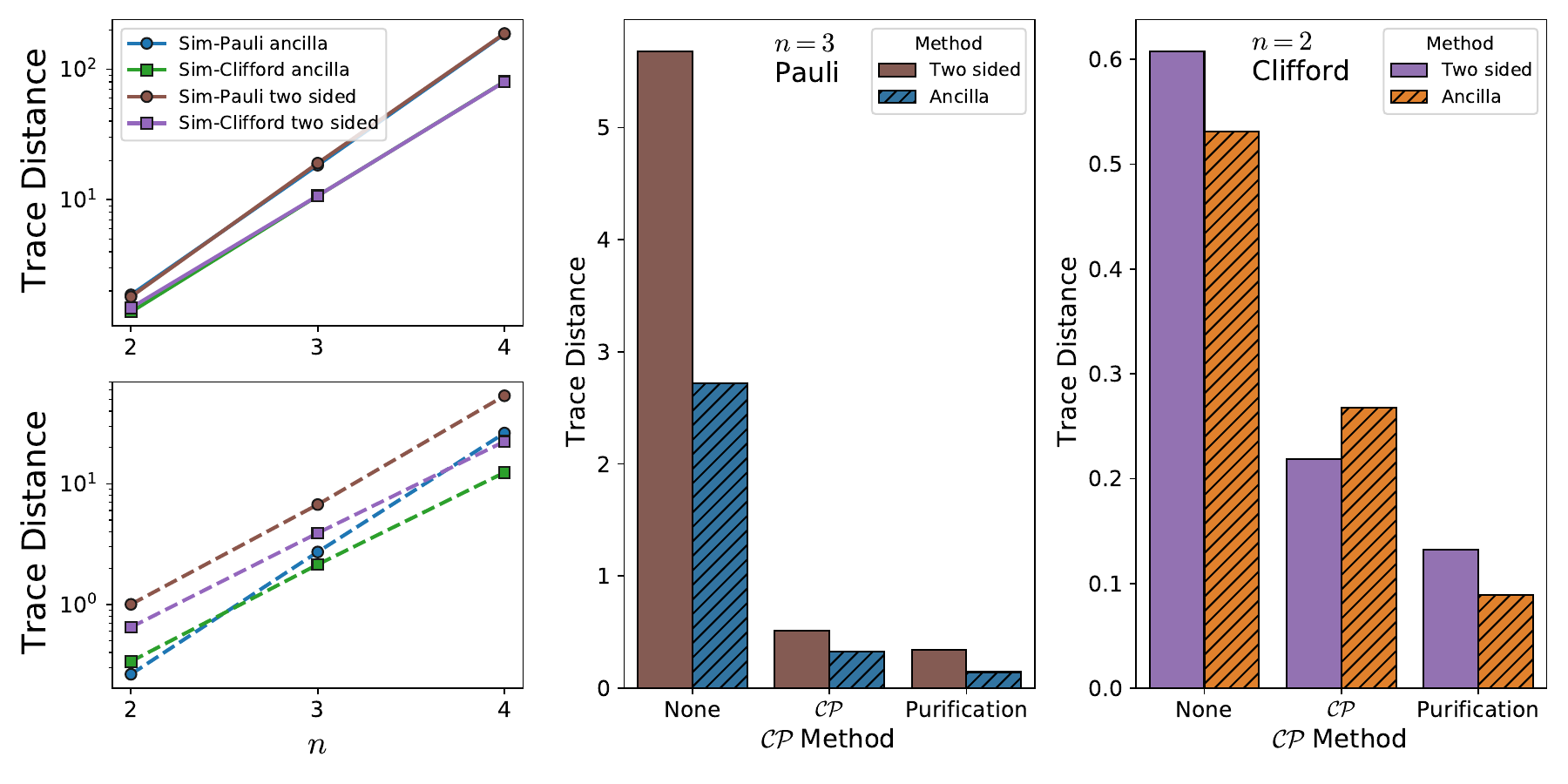}
    \caption{\textit{Left:} Simulated normalized trace distance $T(\Lambda,\Lambda^O)$ scaling for 512 unitaries with a single repetition (top) and 50 repetitions (bottom) between the unitary Choi matrix $\Lambda$ and a reconstruction via Pauli/Clifford ancilla based ShadowQPT or two sided ShadowQPT. Ancilla results are nearly identical to the two sided results with a single repetition, but with multiple repetitions ancilla simulations (squares) have lower trace distance than their two sided counterparts (circles). \textit{Middle:} Normalized trace distance between (IonQ measured) Pauli two-sided ShadowQPT and ancilla ShadowQPT with 512 unitaries and 50 repetitions for a 3 qubit GHZ process. \textit{Right:}  Normalized trace distance between (IonQ measured) Clifford two-sided ShadowQPT and ancilla ShadowQPT with 512 unitaries and 50 repetitions for a 2 qubit GHZ process. The ancilla Cliffords are decomposed into $k=2$ as in Fig.~\ref{fig:full_trace_distance}. }
    \label{fig:two_sided_plots}
\end{figure*}
\subsection{Full Process}

Using classical shadows we directly reconstruct the Choi matrix and compare it using the normalized trace distance in Fig.~\ref{fig:full_trace_distance}. 

\begin{equation}
    T(\Lambda_1, \Lambda_2) = \frac{1}{2^{n+1}} \textup{tr}\left[\sqrt{(\Lambda_1 - \Lambda_2 )^\dagger(\Lambda_1 - \Lambda_2 )}\right]
\end{equation}

First, to illustrate the scaling of ShadowQPT, a comparison of the Pauli and Clifford shadow measurements using a uniform number of shadows is shown in solid lines in Fig.~\ref{fig:full_trace_distance}. While each method is inherently exponential in scaling across process size, $k=2$ Clifford measurements performs better than Pauli measurements, where a full $k=2n$ Clifford measurement scheme should have performed the best. 

In dashed lines, we then show the results when using all of our available measurements (i.e. multiple repetitions of the same unitary). A visualization of each of these Choi matrices is shown in Fig.~\ref{fig:process_visualization} (see Appendix Fig.~\ref{fig:dual_city_plots} for a 3D version of the plot). We see that even in the regime where all possible Pauli strings are measured ($n=2,3$), the Clifford circuits have a smaller trace distance. 

If we consider that our underlying process should be unitary, we can additionally purify our noisy Choi matrix (dash-dotted lines). Purification appears to have an enormous effect, improving the trace distance by up to a factor of 34x. 
For all $n$, purified Clifford shadows have a smaller trace distance compared to the Pauli shadows, but is followed closely by the purified Clifford MLE reconstruction, which has a narrowly smaller distance for $n=3,4$ (see a further comparison in Fig.~\ref{fig:full_trace_distance_bar}).

\begin{figure}
    \centering
\subfloat[\label{fig:reduced_trace_distance_1qubit}]{\includegraphics[width=\columnwidth]{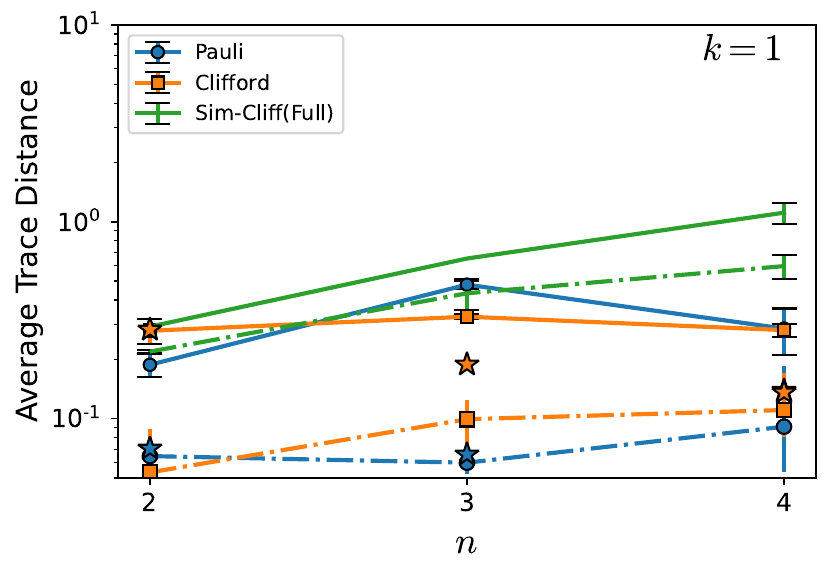} }\\
    \subfloat[\label{fig:reduced_trace_distance_2qubits}]{\includegraphics[width=\columnwidth]{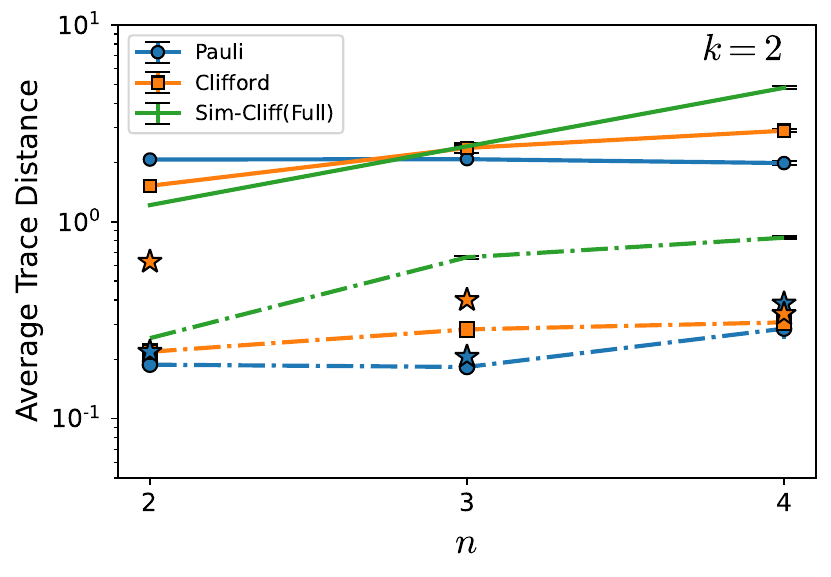}}
    \caption{Normalized trace distance $T(\Lambda_B,\Lambda^O_B)$ of all subsystems of $k=1$ qubit (a) and $k=2$ qubits (b) between the unitary Choi matrix $\Lambda_B$ and a Pauli/Clifford shadow reconstructed Choi matrix $\Lambda^O_B$. 
   Solid lines represent single repetition shadows with 512 unitaries on IonQ for Pauli/Clifford measurements and in green, simulated $2n$-Clifford circuits; dot-dashed lines are $\mathcal{CP}$-projected after the partial trace. The corresponding MLE results are shown in blue/orange stars for Pauli/Clifford measurements respectively. Pauli MLE is done directly in the reduced problem space of $k$ qubit reconstruction. }
    \label{fig:reduced_trace_distance}
\end{figure}

Here we have selected only the purification post-processing step, but there are additional projections, namely $\mathcal{CP}$-projection of Ref.~\cite{Smolin2012}, that one may do. Shown in the right of Fig.~\ref{fig:full_trace_distance} is an example of the effects on the trace distance each post-processing method has for $n=3$. We see a dramatic improvement in the trace distance after purification or projection, with Pauli shadows marginally performing better than the Clifford shadows for this system size. 

As an additional comparison we compare ShadowQPT with both an ancilla based scheme and a two sided scheme, shown in Fig.~\ref{fig:two_sided_plots}. On the left, we show simulations of both Pauli and $k=n$ Clifford measurements for a single and multiple repetitions. The simulations use 512 unitaries and 1(50) repetitions for single(multiple) repetitions. In the case of single repetitions, we find nearly identical exponential scaling but with Clifford measurements with a smaller trace distance. With the addition of multiple repetitions however, the ancilla schemes obtain a smaller trace distance than their two sided counterparts. As each repetition effectively changes both the left and right unitary for the ancilla scheme but only changes the right unitary for the two sided version, this result is unsurprising. We provide more details on the equivalence of the schemes in Appendix \ref{app:equivalence}.

Then on the right of Fig.~\ref{fig:two_sided_plots}, we compare two examples of Pauli and Clifford measurements and the effect of  post processing methods. For the two sided scheme, we use 512 Clifford unitaries at 50 repetitions each run on the IonQ device, and for the ancilla scheme we randomly resample our IonQ data to match. Note that our ancilla measurements are with randomized $k=2$ Clifford unitaries. Pauli measurements are done with 25600 random measured Pauli strings. We find that for both Pauli and Clifford measurements the ancilla based scheme has a lower normalized trace distance without post-processing and under purification, and a larger trace distance for Clifford $\mathcal{CP}$-projection.

\subsection{Non-unitary reduced process}

We further study two examples of a non-unitary reduced process on a subset of the qubits. In Figs.~\ref{fig:reduced_trace_distance_1qubit} and \ref{fig:reduced_trace_distance_2qubits} are shown the average over all 1 and 2 qubit processes. Analogous to the previous section we show a reduced dataset in solid lines, and the fully post-processed data, here $\mathcal{CP}$-projection, in dot-dashed lines. In general, we expect the full $2n$-Clifford measurements, simulation shown in green, to grow exponentially with system size. As we have fixed the size of the process, we see Pauli measurements do not scale with respect to system size, and there is a weak scaling for the 2-qubit Clifford measurement scheme.

For all system sizes $n$, a classical shadow routine produces comparable or better  trace distance compared with MLE, with an application of $\mathcal{CP}$-projection after evolution. In addition, despite having limited separability compared to the Pauli matrices, the performance of the Clifford shadows are competitive with the Pauli shadows. 

\begin{figure}[h!]
    \centering
    \subfloat[]{\includegraphics[width=0.8\columnwidth]{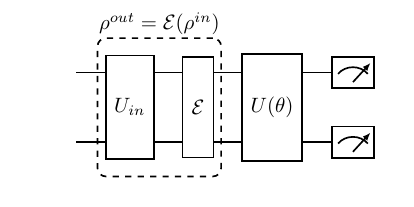}   
    \label{fig:overlap_circuit}
    }\\
    \subfloat[]{\includegraphics[width=0.95\columnwidth]{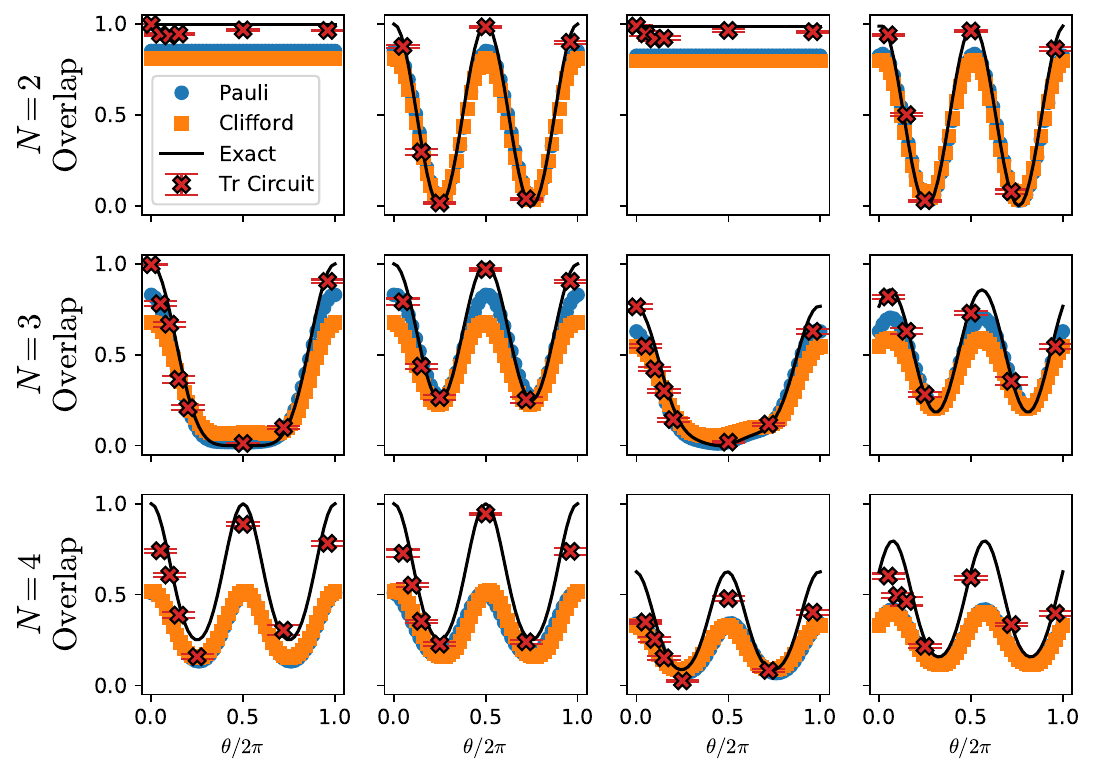}\label{fig:overlaps_all} } \\
    \subfloat[]{\includegraphics[width=0.95\columnwidth]{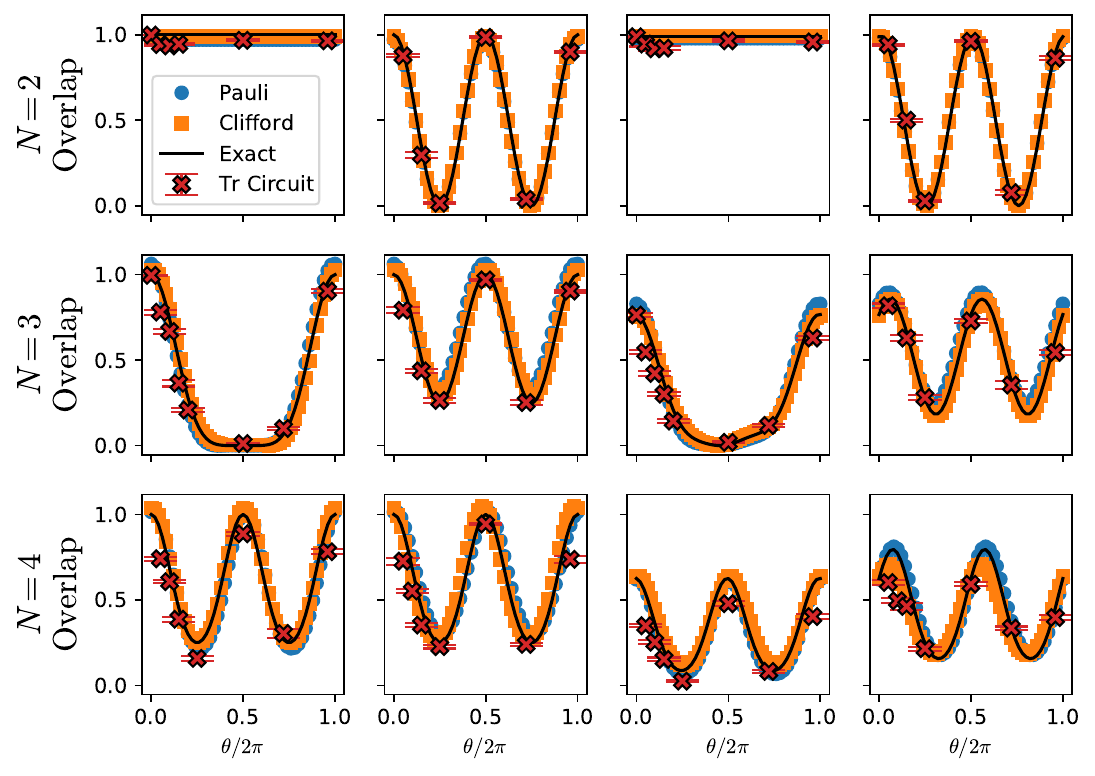}\label{fig:overlaps_all_scale} }
   
    \caption{ (a) Trace circuit measurement of the overlap  $\textup{tr}[\Lambda'^O (\rho^{in\,T}_i \otimes \sigma_i)]$ where $\sigma = U(\theta)^\dagger|0\rangle\langle 0|U(\theta)$. We measure this circuit on IonQ to compare with a prediction from ShadowQPT.
    (b) Comparison of the overlap $w_i^O = \textup{tr}[(\rho^{in\, T}_i \otimes \sigma_i) \Lambda^O]$ for Pauli and Clifford data on systems of size $n=2,3,4$ for pairs of density matrices $[\rho^{in},\sigma]$ formed from the state $[O_\rho^{in} \ket{0},O_\sigma\ket{0}]$ for operators $[I,R_y(\theta)],[I,R_x(\theta)],$ $[\otimes_j R_x(\phi_j),R_y(\theta)],[\otimes_j R_x(\phi_j),R_y(\theta)]$ (each column respectively) over 50 different angles. For comparison we include trace circuit measurements performed on the IonQ using 1000 repetitions per point; error bars are generated by using 10 batches of data.
    (c) Overlap predictions as in (b) but purifying $\Lambda^O$. }
\end{figure}

\subsection{Predicting Overlaps}

We also can compute overlaps using the shadow reconstructions formalism of Thm.~\ref{thm:overlap}.
For the choice of $\sigma_i$, we use circuits parameterized by $R_x(\theta)$ or $R_y(\theta)$ as input to a GHZ circuit; see Appendix for more details on the choices of states.  We compare to both the underlying unitary process and overlaps computed on the IonQ by constructing the trace circuit shown in Fig.~\ref{fig:overlap_circuit}; each circuit is measured with 1000 repetitions.  Note that this latter approach requires additional quantum measurements for each target overlap whereas the shadow tomography uses the same ShadowQPT data for the overlap of any output state.

In Fig.~\ref{fig:overlaps_all} we show the overlap prediction $w_i^O=\textup{tr}[(\rho^{in\, T}_i \otimes \sigma_i)\Lambda^{O}]$ to the shadow reconstruction data $\Lambda^O$ without post-processing. For $n=2,3$, we are able to reasonably reconstruct the correct overlap, but for $n=4$ there is not enough data to obtain the correct pure state behavior. Note that in each case the correct qualitative behavior is evident, despite predicting much smaller overlap values.

Then in Fig.~\ref{fig:overlaps_all_scale} we use the the most efficient post-processing approach to purify the Choi matrix. Purified Pauli and Clifford overlap predictions are now nearly identical with the expected noise-free values, even for the case of $n=4$.

\begin{figure}[t]
   \centering
   \subfloat[]{\includegraphics[width=\columnwidth]{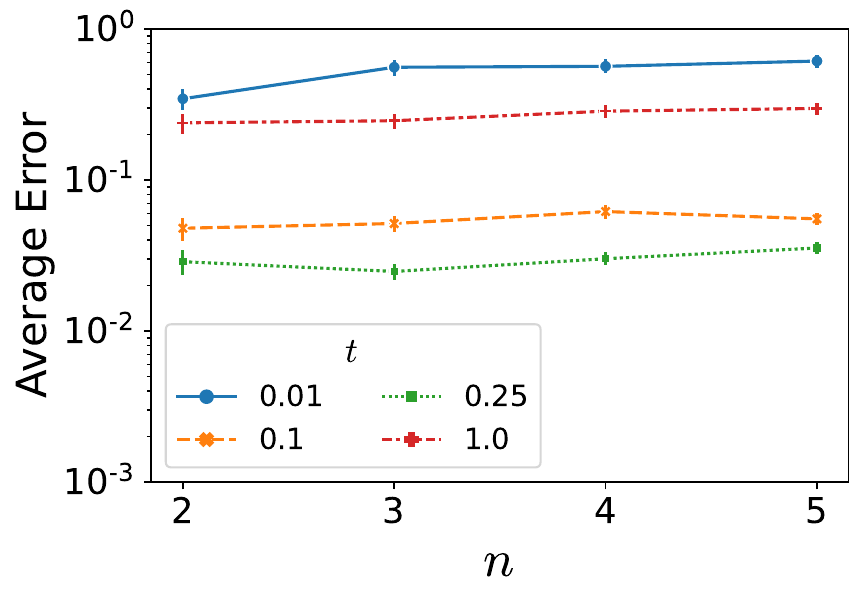}}\\
   \subfloat[]{
    \includegraphics[width=\columnwidth]{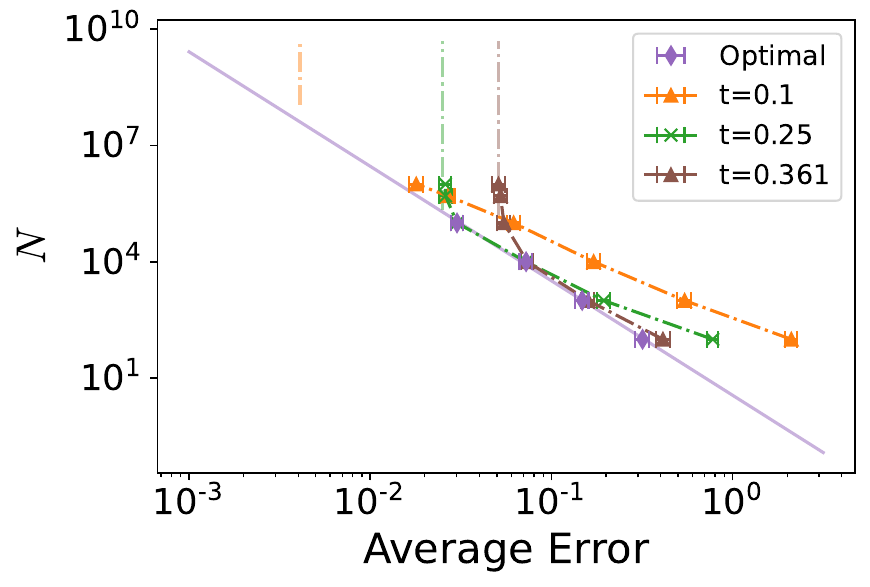} }
    \caption{ 
     (a) Hamiltonian learning simulation results for a 1D transverse field Ising model with $n$ sites and random couplings between $[-1,1]$. We average over 10 disorder realizations and use $N=100000$ random Pauli measurements with no additional post processing. The average error is given by average absolute error $\langle |\tilde{c}_i-c_i|\rangle$ to the original Hamiltonian coupling $c_i$.
    (b) Dependence on number of Pauli measurements $N$ vs. average error of Hamiltonian learning both for fixed $t$ and the optimal $t$ (shown in purple).  Notice at fixed $t$, as the error decreases it approaches the minimum error at that $t$, shown by the vertical dot-dashed line, coming from the systematic error caused by the linear approximation to the time evolution operator. } 
    
    \label{fig:ham_learning_sim}
\end{figure}

\section{Numerical Simulation of ShadowQPT Hamiltonian Learning}
Here we present simulation results with Hamiltonian learning. We study a 1D transverse field Ising model $H = \sum_i J_{i} X_i X_{i+1} + h_i Z_i $ with uniform random couplings $J_i, h_i \in [-1,1)$ over 10 disorder realizations. When applying ShadowQPT, we apply the unitary $e^{-itH}$ and sample $N$ random Pauli measurements with no additional post-processing.
For the $J_i$ terms we use $p_i = Z_i$, $q_i=[h_i,p_i]=-Y_iX_{i+1}$ and for the $h_i$ terms $p_i=X_i$, $q_i=[h_i,p_i]=Y_i$.   

Using the methods described in Section~\ref{sec:HamLearnAlg}, we plot the behavior of the average error in Fig.~\ref{fig:ham_learning_sim}.  In Figure~\ref{fig:ham_learning_sim}a, we show  the average absolute error $\langle |\tilde{c}_i - c_i| \rangle$ between the ShadowQPT learned couplings $\tilde{c_i}$ and the original couplings $c_i$ ($c_i  \equiv \{J,h\}$) with a simulation of $N=100000$.  
Given a fixed $t$, the average error scales nearly independently of system size $n$ (and should  be bounded by $O(\log(n))$ scaling at large $n$).
For small $t$, while the first order expansion of the time evolution operator is a good approximation,
one needs many measurements to 
determine the couplings beyond statistical noise as the channel is very close to the identity.   For large $t$, the higher order terms generated by the exponential become important and the learned couplings $\tilde{c}_i \approx c_i^{renorm}(t)=i\mathrm{tr}(\rho_{\Lambda_t}(p_i)q_i)/t \not\approx c_i$. At intermediate $t$, we find a favorable regime where there are enough measurements to learn the random couplings to an average error of $10^{-1}$.  

In Fig.~\ref{fig:ham_learning_sim}b (see Appendix for further details) we more carefully analyze the errors in this process, particularly as we change the number of measurements and time.   There are two sources of error in Hamiltonian learning using ShadowQPT.  A systematic error occurs due to the linear approximation of the time evolution operator and is given by $\epsilon_s(t) \equiv \langle | c_i - c_i^{renorm}(t)|\rangle$ and scales as $O(t^2)$.  Notice that this error does not depend on the number of measurements.  There is also a statistical error coming from the stochastic nature of the ShadowQPT process which, scales as  $\epsilon \propto 1/ \sqrt{N}$ (see $t=0.1$ in Fig.~\ref{fig:ham_learning_sim}b) and empirically $\epsilon \propto 1/t$ (see Appendix Fig.~\ref{fig:app_ham_learning_slice}).  For a fixed t, we can therefore make the error better by increasing the number of measurements until we reach an error of $\epsilon_s(t)$.  The optimal error, found when the systematic error is approximately the same as the statistical error, is shown in purple diamonds in Fig.~\ref{fig:ham_learning_sim}b and scales as $O(1/\epsilon^{3})$. Despite the increase in scaling, the number of measurements required are overall fewer than from the $O(1/\epsilon^2)$ results of using a fixed time. As we also note earlier in Sec.~\ref{sec:HamLearnAlg}, the systematic error can be further reduced by higher Taylor expansion and classical processing to achieve better overall scaling. Our study provides a prescription for choosing the number of measurements needed for a given value of $\epsilon$ with weak dependence of $n$.

\section{Conclusion}

In this work, we have developed classical shadows algorithms for quantum process tomography. We discuss the power of classical ShadowQPT in Theorem.~\ref{thm:k-qubit},~\ref{thm:overlap}.  Interestingly, ShadowQPT can be applied to low-weight inputs and outputs at a cost in quantum measurements which scales only logarithmically with $N$. 
To realize the ShadowQPT on near term quantum devices, we study both the effects of random Pauli measurements and Clifford measurements. In addition, we explore different post-processing techniques.
We further benchmark our methods and achieve good performance for unitary process, non-unitary reduced process, and overlap estimation on IonQ quantum hardware up to $n=4$ qubits. We find that our post-processing techniques, particularly purification, result in the ability to simultaneously compute very accurate overlaps for any state.   
Our method not only provides a theoretical foundation,
but also effectively applies to NISQ quantum device. %
Additionally, we show the equivalence between the ancilla based ShadowQPT and a two-sided scheme, and find both via simulations and measurements that the ancilla scheme under multiple repetitions achieves better performance. 

We additionally develop Hamiltonian learning using ShadowQPT, and discuss its scaling in Corollary~\ref{thm:ham_learn}. Then using numerical simulations we show that there is a logarithmic dependence on system size, as expected, and using intermediate time propagators can be advantageous for minimizing the total number of samples needed for a given error threshold.   

For future exploration, one can consider to integrate our methods with recently developed Hamiltonian driven shadow methods~\cite{hu2021classical,hu2021hamiltoniandriven} or full Clifford circuit decomposition \cite{huggins2021unbiasing}, which could allow for more flexible choices for unitaries as well as applications to different experimental platforms. 
ShadowQPT can further study the dynamics of observables and correlation functions, which is explored in a concurrent work of Ref.~\cite{kunjummen2021shadow} that also studied the use of classical shadows in QPT. 
We anticipate that ShadowQPT will be an important tool for future exploration and validation of quantum devices.

\begin{acknowledgements}
We acknowledge support from  the NSF Quantum Leap
Challenge Institute for Hybrid Quantum Architectures and
Networks (NSF Award No. 2016136). This work is supported by the National Science Foundation under Cooperative Agreement PHY-2019786 (The NSF AI Institute for Artificial Intelligence and Fundamental Interactions, http://iaifi.org/). This material is based upon work supported by the U.S. Department of Energy, Office of Science, National Quantum Information Science Research Centers, Co-design Center for Quantum Advantage (C2QA) under Contract No. DE-SC0012704.
A portion of this work was completed during the 2021 QHack hackathon. We gratefully acknowledge funding for quantum computing machine-time from AWS and Xanadu. R.L. and D.L. thank our two hackathon team members E. Chertkov and A. Khan, and acknowledge helpful discussions from Hongye Hu, Nicholas Rubin, Juan Carrasquilla and Yizhuang You, as well as insightful comments on the shadow schemes from Daniel K. Mark, Minh Tran, Tianci Zhou, and Soonwon Choi. The Flatiron Institute is a division of the Simons Foundation. 

\end{acknowledgements}

\bibliography{reference}

\clearpage

\onecolumngrid
\begin{center}
  \noindent\textbf{Appendix}
  \bigskip
    
  \noindent\textbf{\large{Classical Shadows of Quantum Process Tomography on Near-term Quantum Computers}}
    
\end{center}

\appendix

\renewcommand\thefigure{A\arabic{figure}}  
\renewcommand\thetable{A\arabic{table}}  
\setcounter{figure}{0}  
\setcounter{table}{0}

\section{Proofs for Shadow Process Tomography}\label{app:proof}

We first review the formal version of a classical shadow quantum state tomography theorem in Ref.~\cite{Huang2020}.

\begin{theorem}\label{thm:shadow-formal}
Fix a measurement primitive $U$, a collection $\{O_i\}_{i=1,\dots,M}$ of $2^n \times 2^n$ Hermitian matrices and accuracy parameters $\epsilon,\delta \in (0,1)$. Set $K= 2 \textup{log}(2M/\delta)$ and $N=\frac{34}{\epsilon^2} \textup{max}_i ||O_i - \frac{\textup{tr}(O_i)}{2}\mathds{I}||^2_{shadow}$, where $||.||_{shadow}$ is the shadow norm defined with the inverse channel $\mathcal{M}^{-1}$ induced by the measurement primitive as follows,

\begin{equation}
    ||O||_{shadow}=max_{\sigma} (\mathbb{E}_{U \sim \mathcal{U}} \sum_{b \in \{0,1\}^{n}} \bra{b}U\sigma U^{\dagger}\ket{b}\bra{b}U\mathcal{M}^{-1}(O)U^{\dagger}\ket{b}^{2})^{1/2}
\end{equation}

Then, a collection of $NK$ independent classical shadows allow for accurately predicting all features via median of means prediction $\hat{o_i}$:

\begin{equation}
    |\hat{o_i}(N,K) - \textup{tr}(O_i \rho)| \leq \epsilon 
\end{equation}
$\textup{for all}$ $1 \leq i \leq M$  with probability at least $1-\delta$.

In particular, for random global Clifford measurement primitive, $||O - \frac{\textup{tr}(O)}{2}\mathds{I}||^2_{shadow} \leq 3 \textup{tr}(O^2)$. For random single qubit Clifford measurement primitive, $||O - \frac{\textup{tr}(O)}{2}\mathds{I}||^2_{shadow} \leq 4^k ||O||_{\infty}^2$ where $k$  is the locality of the operator $O$ and $||.||_{\infty}$ is the spectral norm. Furthermore, if $O$ is a single $k$-local Pauli observable, $||O - \frac{\textup{tr}(O)}{2}\mathds{I}||^2_{shadow} \leq 3^k $.
\end{theorem}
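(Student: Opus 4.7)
The plan is to prove this by the standard classical-shadows recipe: (i) establish unbiasedness of the single-shot estimator, (ii) bound its single-shot variance by the shadow norm, (iii) promote this variance bound to an $\epsilon$-$\delta$ guarantee via a median-of-means concentration argument combined with a union bound over the $M$ observables, and (iv) specialize the shadow-norm bound to the global-Clifford, local-Clifford, and $k$-local Pauli cases. First I would fix the measurement primitive and define, for each shot, the single-shot shadow $\hat{\rho}=\mathcal{M}^{-1}\bigl(U^{\dagger}\ket{b}\bra{b}U\bigr)$, where the measurement channel is $\mathcal{M}(\sigma)=\mathbb{E}_{U\sim\mathcal{U}}\sum_{b}\bra{b}U\sigma U^{\dagger}\ket{b}\,U^{\dagger}\ket{b}\bra{b}U$. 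Tomographic completeness of the ensemble $\mathcal{U}$ guarantees $\mathcal{M}$ is invertible on the relevant subspace, and a direct computation of $\mathbb{E}_{U,b}[\hat\rho]$ yields $\mathbb{E}[\hat\rho]=\rho$, so $\hat{o}_i\equiv\mathrm{tr}(O_i\hat\rho)$ is an unbiased estimator of $\mathrm{tr}(O_i\rho)$.

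Next I would bound the single-shot variance. Using the self-adjointness of $\mathcal{M}^{-1}$ with respect to the Hilbert--Schmidt inner product and replacing $O_i$ by its traceless part $\tilde O_i=O_i-\tfrac{\mathrm{tr}(O_i)}{2}\mathds{I}$ (which does not change $\mathbb{E}[\hat o_i]$ but minimizes the variance), one gets $\mathrm{Var}(\hat o_i)\le \mathbb{E}[(\mathrm{tr}\tilde O_i\hat\rho))^2]\le\max_\sigma\mathbb{E}_{U,b}\bigl[\bra{b}U\sigma U^{\dagger}\ket{b}\,\bra{b}U\mathcal{M}^{-1}(\tilde O_i)U^{\dagger}\ket{b}^{2}\bigr]=\|\tilde O_i\|_{\mathrm{shadow}}^{2}$, which matches the definition in the statement. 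This is a purely second-moment calculation and is the reason the shadow norm appears.

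Then I would run a median-of-means argument. Split the $NK$ shadows into $K$ blocks of size $N$, form the empirical mean $\bar o_i^{(k)}$ in each block, and let $\hat o_i$ be the median over $k$. With $N=\frac{34}{\epsilon^{2}}\max_i\|\tilde O_i\|_{\mathrm{shadow}}^{2}$, Chebyshev's inequality gives that each block-mean lies within $\epsilon$ of $\mathrm{tr}(O_i\rho)$ with probability at least $1-1/34$, i.e., comfortably above $1/2$. Hoeffding applied to the indicator that a block is ``good'' shows the median fails only with probability $\exp(-2K(1/2-1/34)^{2})$, which is at most $\delta/(2M)$ for the prescribed $K=2\log(2M/\delta)$ up to the constant $34$; a union bound over $i=1,\dots,M$ yields the simultaneous guarantee with failure probability $\delta$.

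Finally I would derive the three ensemble-specific shadow-norm bounds. For the global $n$-qubit Clifford ensemble, the 3-design property lets one evaluate the relevant Haar-type averages explicitly, giving $\mathcal{M}^{-1}(X)=(2^{n}+1)X-\mathrm{tr}(X)\mathds{I}$ and the clean bound $\|\tilde O\|_{\mathrm{shadow}}^{2}\le 3\,\mathrm{tr}(O^{2})$. For the product single-qubit Clifford ensemble, the tensor structure factorizes the expectation over the $n$ qubits and each qubit acted on nontrivially contributes a factor of $4$, producing $4^{k}\|O\|_{\infty}^{2}$; for a single $k$-local Pauli observable, each nontrivial site contributes $3$ rather than $4$ because a single-qubit Pauli has a sharper on-site variance, giving $3^{k}$. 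The main technical obstacle is precisely this last step: carrying out the Clifford-weighted averages and verifying the tensor factorization for the local ensembles rigorously, since everything else (unbiasedness, variance-to-shadow-norm, and median-of-means) is essentially template calculation.
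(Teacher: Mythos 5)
The paper does not actually prove this statement: Theorem~A.1 is quoted verbatim from Ref.~\cite{Huang2020} as the formal classical-shadow guarantee on which the paper's own Theorems~2.1 and~2.2 are built, so there is no in-paper proof to compare against. Your outline correctly reconstructs the standard argument of that reference (unbiasedness of $\mathcal{M}^{-1}(U^{\dagger}\ket{b}\bra{b}U)$, single-shot variance bounded by the shadow norm of the traceless part, median-of-means plus a union bound, and the 3-design/tensor-factorization evaluations for the three ensembles), with only one cosmetic imprecision: subtracting $\tfrac{\mathrm{tr}(O_i)}{2}\mathds{I}$ does shift $\mathbb{E}[\mathrm{tr}(O_i\hat\rho)]$, but by a deterministic constant (since each single-shot shadow has fixed trace), which is why the variance is unchanged --- that is the statement you want there.
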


\quad

We now combine the above theorem and the Jamio\l{}kowski process-state isomorphism~\cite{JAMIOLKOWSKI1972275} to reach the following two theorems for shadow quantum process tomography.

\begin{theorem}
For a $n$-qubit quantum process $\rho_{\Lambda}$ and $\epsilon,\delta \in (0,1)$, given a set of density matrice pairs  $\{(\rho^{in}_1,\sigma_1), \dots,(\rho^{in}_{M},\sigma_M)\}$, the number of measurements $N$ that suffices to predict the overlaps $\textup{tr}(\rho_{\Lambda}(\rho^{in\,T}_i \otimes \sigma_i))$ for any $i$ up to error $\epsilon$ with probability $1-\delta$ is

\begin{equation}
    N = \frac{68}{\epsilon^2} \textup{log}(2M/\delta) \textup{max}_{i} ||O_{i} - \frac{\textup{tr}(O_{i})}{2}\mathds{I}||^2_{shadow}
\end{equation}
where $O_{i}=\rho^{in\,T}_i \otimes \sigma_i$. For random global Clifford measurement,  $N = \frac{68}{\epsilon^2} \textup{log}(2M/\delta) \textup{max}_{i} \textup{tr}(O_{i}^2)$. If $\rho^{in}_i$ and $\sigma_i$ are all pure states, then $\textup{tr}(O_{i}^2)=1$. For random single qubit Clifford measurement, $N = \frac{68}{\epsilon^2} \textup{log}(2M/\delta) \textup{max}_{i} 4^{k_{i}} ||O_{i}||_{\infty}^2$, where $O_{i}$ acts nontrivially on $k_i$-qubits. If $\rho^{in}_i$ and $\sigma_i$ are all pure states, then $||O_{i}||_{\infty}^2=1$.

\end{theorem}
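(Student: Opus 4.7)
The plan is to reduce this statement directly to the formal classical-shadow estimation bound (Theorem~\ref{thm:shadow-formal}) by treating the normalized Choi matrix $\rho_{\Lambda}$ as a genuine $2n$-qubit density matrix via the Jamio\l{}kowski isomorphism. The identity $\textup{tr}(\process(\rho^{in}_i)\sigma_i) = 2^n\,\textup{tr}(\rho_{\Lambda}(\rho^{in\,T}_i \otimes \sigma_i))$ already written earlier in the paper converts each target overlap into a linear functional of $\rho_{\Lambda}$, so the theorem is really a statement about simultaneously estimating $M$ linear functionals of a fixed $2n$-qubit state.

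First I would set $O_i = \rho^{in\,T}_i \otimes \sigma_i$, at which point the desired quantity becomes $\textup{tr}(O_i\,\rho_{\Lambda})$. Any ShadowQPT sampling scheme of Sec.~IV (ancilla-based or two-sided, whose equivalence is deferred to Appendix~\ref{app:equivalence}) produces classical shadows of $\rho_{\Lambda}$ with a prescribed unitary ensemble, so the hypotheses of Theorem~\ref{thm:shadow-formal} are satisfied verbatim. Plugging the $O_i$ into that theorem yields the general bound
\begin{equation*}
  N \;=\; \frac{68}{\epsilon^{2}} \,\log(2M/\delta)\, \max_{i} \bigl\| O_{i} - \tfrac{\textup{tr}(O_{i})}{2}\mathds{I} \bigr\|^{2}_{shadow},
\end{equation*}
where the constant $68 = 2\cdot 34$ is the product of the single-batch requirement $34/\epsilon^{2} \cdot \|\cdot\|^{2}_{shadow}$ and the median-of-means factor $K = 2\log(2M/\delta)$.

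Next I would specialize the shadow-norm to the two ensembles of interest using the bounds already recorded in Theorem~\ref{thm:shadow-formal}. For the global Clifford ensemble, $\|O - \tfrac{\textup{tr}(O)}{2}\mathds{I}\|^{2}_{shadow} \leq 3\,\textup{tr}(O^{2})$, and the product structure $O_i = \rho^{in\,T}_i \otimes \sigma_i$ factorizes cleanly, giving $\textup{tr}(O_i^{2}) = \textup{tr}((\rho^{in}_i)^{2})\,\textup{tr}(\sigma_i^{2}) \leq 1$, with equality precisely when both factors are pure. For the single-qubit Clifford (equivalently Pauli-6) ensemble, $\|O - \tfrac{\textup{tr}(O)}{2}\mathds{I}\|^{2}_{shadow} \leq 4^{k}\,\|O\|^{2}_{\infty}$, where $k$ is the support of $O$; for $O_i$ of product form the support is just the union of the supports of $\rho^{in}_i$ and $\sigma_i$, and pure factors yield $\|O_i\|_{\infty} = \|\rho^{in}_i\|_{\infty}\|\sigma_i\|_{\infty} = 1$. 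Absorbing the constants $3$ and $4^{k_i}$ produces exactly the three special cases stated.

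The main obstacle is essentially conceptual rather than computational: one has to see that process tomography is faithfully captured by state tomography of $\rho_{\Lambda}$ in a way that preserves the linear-functional form required by the shadow theorem, and that the product structure $\rho^{in\,T}\otimes\sigma$ makes both $\textup{tr}(O^{2})$ and $\|O\|_{\infty}$ multiplicative across the two halves of the Choi register. Once these two observations are in hand, the remaining work is simply arithmetic bookkeeping of the constants; no new estimation or concentration argument beyond Theorem~\ref{thm:shadow-formal} is needed.
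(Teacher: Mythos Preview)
Your proposal is correct and follows essentially the same approach as the paper: identify $O_i=\rho^{in\,T}_i\otimes\sigma_i$, invoke Theorem~\ref{thm:shadow-formal} on the $2n$-qubit state $\rho_{\Lambda}$, and then specialize the shadow-norm bounds, noting that a tensor product of pure states is again rank-one so both $\textup{tr}(O_i^2)$ and $\|O_i\|_\infty$ equal $1$. Your exposition is a bit more explicit about the multiplicativity of these quantities, but the argument is the same.
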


\begin{proof}
Under the the Jamio\l{}kowski process-state isomorphism~\cite{JAMIOLKOWSKI1972275},

\begin{equation}
    \textup{tr}(\rho_{\Lambda} (\rho^{in\,T}_i \otimes \sigma_i))=\textup{tr}(\rho_{\Lambda} (O_{i})),
\end{equation}

The result follows directly from applying Theorem.~\ref{thm:shadow-formal} with $M$ measurements.
For random global Clifford measurement and pure states $\rho^{in}_i$ and $\sigma_i$, $O_{i}=\rho^{in\,T}_i \otimes \sigma_i$ is also a pure state of rank 1 so that $\textup{tr}(O_{i}^2)=1$. It implies that $||O_i - \frac{\textup{tr}(O_i)}{2}\mathds{I}||^2_{shadow} \leq 3 \textup{tr}(O_i^2) = 3$. For random single qubit Clifford measurement and pure states $\rho^{in}_i$ and $\sigma_i$, $O_{i}=\rho^{in\,T}_i \otimes \sigma_i$ is also a pure state so that $||O_{i}||_{\infty}^2=1$. 
\qed
\end{proof}

\begin{theorem}
For a $n$-qubit process Choi matrix $\rho$ and $\epsilon,\delta \in (0,1)$, the following statements hold:

(i) the number of measurements $N$ that suffices to simultaneously predict any reduced $k$-qubit process Choi matrix $\rho_k$ up to error $\epsilon$ in the Frobenius norm and probability $1-\delta$ is

\begin{equation}\label{eq:A2_N}
    N = \frac{68}{\epsilon^2} 4^k \textup{log}(2 (8n)^{2k} / \delta)max_{i} ||O_i^{(2k)}- \frac{\textup{tr}(O_i^{(2k)})}{2}\mathds{I}||_{shadow}^2
\end{equation}
where $O_i^{(2k)}$ is any $2k$-qubit Pauli observable.

In particular, for random global Clifford measurement, $N = \frac{204}{\epsilon^2} 4^{n+k} \textup{log}(2 (8n)^{2k} /\delta)$ and for Pauli-6 POVM measurement, $N = \frac{68}{\epsilon^2}36^k \textup{log}(2 (8n)^{2k} /\delta) max_i||O_i^{(2k)}||_{\infty}^2$.

\quad

(ii) the number of Pauli-6 POVM measurements $N$ that suffices to simultaneously predict any reduced $k$-qubit process Choi matrix $\rho_k$ up to error $\epsilon$ in the trace norm and probability $1-\delta$ is

\begin{equation}
    N = \frac{8}{3}\frac{144^k}{\epsilon^2} \textup{log}((24n)^{2k}/\delta)
\end{equation}
\end{theorem}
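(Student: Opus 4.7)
The plan is to reduce the statement about reduced $k$-qubit process Choi matrices to the shadow-tomography sample bound of Theorem A.1, applied to the $2k$-qubit density matrix that encodes each reduced Choi matrix via the Jamio\l{}kowski isomorphism. The key observation is that for any $k$-qubit subsystem $S \subset [n]$, the reduced Choi matrix $\rho_{\Lambda^{(k,S)}}$ is a $2k$-qubit state obtained by tracing out $2(n-k)$ qubits from the full $2n$-qubit $\rho_\Lambda$. Hence, for any $2k$-qubit operator $O^{(2k)}$, one has $\mathrm{tr}(O^{(2k)} \rho_{\Lambda^{(k,S)}}) = \mathrm{tr}((O^{(2k)} \otimes \mathds{I}_{2(n-k)}) \rho_\Lambda)$, which is exactly a linear functional of $\rho_\Lambda$ that classical shadows can predict.

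For part (i), I would expand the Frobenius error in the orthogonal $2k$-qubit Pauli basis,
\[
\|\rho_{\Lambda^{(k,S)}} - \hat{\rho}_{\Lambda^{(k,S)}}\|_F^2 = \frac{1}{4^k} \sum_{P \in \mathcal{P}_{2k}} \bigl(\mathrm{tr}(P \rho_{\Lambda^{(k,S)}}) - \hat{o}_{P,S}\bigr)^2,
\]
so that controlling each of the $4^{2k}$ Pauli coefficients to accuracy $\epsilon' = \epsilon/2^k$ yields Frobenius error at most $\epsilon$. Apply Theorem~A.1 to the family of observables $\{P^{(2k)} \otimes \mathds{I}_{2(n-k)} : |S|=k,\ P \in \mathcal{P}_{2k}\}$, whose cardinality is $M = \binom{n}{k}\,4^{2k} \leq (8n)^{2k}/2$, producing the stated logarithmic factor. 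The shadow-norm specializations then give the three cases: for global Clifford, $\|O - \tfrac{\mathrm{tr}(O)}{2}\mathds{I}\|^2_{\text{shadow}} \leq 3\,\mathrm{tr}(O^2) = 3\cdot 4^n$ because $\mathrm{tr}((P^{(2k)} \otimes \mathds{I})^2) = 4^n$, combining with $1/\epsilon'^2 = 4^k/\epsilon^2$ to yield $4^{n+k}$; for Pauli-6 the shadow norm of a $2k$-qubit Pauli is $3^{2k} = 9^k$, yielding $36^k = 4^k\cdot 9^k$; and for general $O^{(2k)}$ under single-qubit Clifford one retains the factor $\|O^{(2k)}\|_\infty^2$.

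For part (ii), I would combine part (i) with the rank bound $\|A\|_1 \leq \sqrt{\mathrm{rank}(A)}\,\|A\|_F \leq 2^k \|A\|_F$ for operators supported on $2k$ qubits, tightening the per-Pauli accuracy to $\epsilon' = \epsilon/4^k$; this immediately propagates the Pauli-6 shadow norm $9^k$ into the claimed $144^k = 16^k \cdot 9^k$ scaling. The hard part will be matching the stated constants and the $(24n)^{2k}$ inside the log, rather than the $(8n)^{2k}$ one obtains from a naive union bound, as well as the sharp $8/3$ prefactor. I expect this to require bypassing the generic median-of-means reduction of Theorem~A.1 in favor of a direct Chebyshev/Bernstein analysis of the empirical mean that exploits the variance structure of the Pauli-6 shadow estimator on Pauli observables; the extra $9^k$ factor inside the logarithm likely arises from taking an $\epsilon$-net on the dual unit ball $\{B : \|B\|_\infty \leq 1\}$ rather than union-bounding only over the $4^{2k}$ Pauli directions, which appears to be the main technical obstacle to closing the gap.
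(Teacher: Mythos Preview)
Your argument for part~(i) is essentially the paper's: expand each reduced $2k$-qubit Choi state in the Pauli basis, lift each Pauli to a $2n$-qubit observable $O_i^{(2k)}\otimes\mathds{I}_{2(n-k)}$, apply Theorem~A.1 with per-coefficient accuracy $\epsilon/2^k$, and union-bound over at most $(8n)^{2k}$ observables. The shadow-norm specializations you quote ($3\cdot 4^n$ for global Clifford, $9^k$ for Pauli-6) are exactly the ones the paper uses.

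For part~(ii) you diverge from the paper. The paper does \emph{not} derive the trace-norm bound from part~(i); it simply invokes Lemma~1 of Ref.~\cite{huang2021provably}, which already states that $N=\tfrac{8}{3\epsilon^2}\,12^{r}\log(p^{r}12^{r}/\delta)$ Pauli-6 measurements suffice to learn every $r$-qubit reduced state of a $p$-qubit system to trace-norm error $\epsilon$. Setting $r=2k$, $p=2n$ gives $12^{2k}=144^k$ and $p^r12^r=(24n)^{2k}$ immediately, with no further work. Your route via $\|A\|_1\le 2^k\|A\|_F$ is correct at the level of scaling and would produce $\tfrac{68}{\epsilon^2}144^k\log(2(8n)^{2k}/\delta)$, which has the right $144^k/\epsilon^2$ dependence but a worse prefactor (68 versus $8/3$) and hence does not establish the theorem \emph{as stated}. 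Your intuition that matching the sharp constants requires a direct concentration analysis with an $\epsilon$-net over the trace-norm dual ball is in fact roughly how Lemma~1 of~\cite{huang2021provably} is proved, but the present paper treats that as a black box rather than reproving it. So the ``technical obstacle'' you anticipate is simply outsourced to the cited reference.
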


\begin{proof}
(i) According to the Jamiołkowski process-state isomorphism~\cite{JAMIOLKOWSKI1972275}, a quantum process on a $n$-qubit system can be represented by a density matrix $\rho$ on a $2n$-qubit system. Any reduced $k$-qubit process matrix $\rho_k$ could be expressed as 

\begin{equation}
    \rho_k = \frac{1}{4^k} \sum^{16^k} \alpha_i O_i^{(2k)} 
\end{equation}
where $\alpha_i = \textup{tr}(\rho_k O_i^{(2k)}) = \textup{tr}(\rho O_i^{(2k)} \otimes \mathds{I}_{2n-2k})$.

Denote the estimated $k$-qubit reduced density matrix from the shadow tomography as $\hat{\rho}_{k}$ and $ \hat{\rho}_k = \frac{1}{4^k} \sum^{16^k} \hat{\alpha}_i O_i^{(k)} $, where $\hat{\alpha}_i$ is the shadow estimation of observable $O_i^{2k}$. It follows that 

\begin{equation}
    \langle \hat{\rho}_k - \rho_k, \hat{\rho}_k - \rho_k \rangle_{F} = \langle \frac{1}{4^k} \sum^{16^k} (\hat{\alpha}_j - \alpha) O_j^{(2k)},\frac{1}{4^k} \sum^{16^k} (\hat{\alpha}_i - \alpha) O_i^{(2k)} \rangle_{F} = \frac{1}{4^k} \sum^{16^k} (\hat{\alpha}_i-\alpha_i)^2
\end{equation}
where $\langle \cdot, \cdot \rangle_{F}$ is the Frobenius inner product.

According to Theorem.~\ref{thm:shadow-formal}, with $N$ given in Eq.~\ref{eq:A2_N}, we have 

\begin{equation}
    |\hat{\alpha}_i - \alpha_i|=|\hat{\alpha}_i-\textup{tr}(\rho O_i^{(2k)} \otimes \mathds{I}_{2n-2k})| \leq  \frac{\epsilon}{2^k} \quad \forall \quad i
\end{equation}

It implies that $\langle \hat{\rho}_k - \rho_k, \hat{\rho}_k - \rho_k \rangle_{F} \leq \frac{16^k}{4^k} \frac{\epsilon^2}{4^k}=\epsilon^2$, so that $||\hat{\rho}_k-\rho_k||_{F} \leq \epsilon$ where $||\cdot||_F$ is the Frobenius norm. For random global Clifford measurement, the result follows from $||O - \frac{\textup{tr}(O)}{2}\mathds{I}||^2_{shadow} \leq 3 \textup{tr}(O^2)=3 \times 4^n$ for $O=O_i^{(2k)}$. For Pauli-6 POVM measurement, the result follows from $||O - \frac{\textup{tr}(O)}{2}\mathds{I}||^2_{shadow} \leq 9^k $ for $O=O_i^{(2k)}$~\cite{acharya2021informationally}. 

\quad

(ii) For a given $\epsilon,\delta \in (0,1)$, the Pauli-6 POVM measurement of size $N=\frac{8}{3 \epsilon^2} 12^r (\textup{log} (p^r 12^r / \delta) )$ is sufficient to predict all reduced density matrix on subsystem size $r \leq p$ on a $p$-qubit system with trace norm error $\epsilon$ and probability at least $1-\delta$ according to Lemma 1 in Ref.~\cite{huang2021provably}. With the Jamio\l{}kowski process-state isomorphism~\cite{JAMIOLKOWSKI1972275}, a $k$-qubit process Choi matrix is equivalent to a $2k$-qubit density matrix on a $2n$-qubit system. The result follows from choosing $r=2k$ and $p=2n$, which implies that with probability $1-\delta$, $N = \frac{8}{3}\frac{144^k}{\epsilon^2} \textup{log}((24n)^{2k}/\delta)$ number of Pauli-6 POVM measurements is sufficient to simultaneously predict any reduced $k$-qubit process Choi matrix $\rho_k$ up to error $\epsilon$ in the trace norm, i.e. any estimator $\hat{\rho}_k$ from the shadow measurement is closer to the exact $\rho_k$ by $||\hat{\rho}_k - \rho_k||_1 < \epsilon$. \qed
\end{proof}

\section{Equivalence of Two-sided Scheme with the Top-bottom Factorized Ancilla-based Scheme}\label{app:equivalence}

We would like to show that the two-sided scheme is equivalent to the top-bottom factorized ancilla-based scheme. Precisely, consider a top-bottom factorized ancilla-based scheme with unitaries $\{U_j\}$ acting on the top half and unitaries $\{U_i\}$ acting on the bottom half, where $U_j$, $U_i$ are sampled uniformly from a set $G$. Up to permutation, the reconstruction formula for the Choi matrix is the following:

\begin{equation}
        \rho_{\Lambda}^{a} = \mathbb{E}_{U_j \sim G} \mathbb{E}_{U_i \sim G} \mathbb{E}_{b_j, b_i \sim P(i,j)} \  \mathcal{M}^{-1}_n(U_i^{\dagger}\ket{b_i}\bra{b_i}U_i) \otimes \mathcal{M}^{-1}_n(U_j^{\dagger}\ket{b_j}\bra{b_j}U_j)
\end{equation}
where $b_j,b_i$ are measurement outcome with respect to $U_j, U_i$ in the computational basis, $P(i,j)=\textup{tr}(\rho_{\Lambda}(U_i^{\dagger}\ket{b_i}\bra{b_i}U_i \otimes U_j^{\dagger}\ket{b_j}\bra{b_j}U_j))$. 
$\mathcal{M}^{-1}_n$ is the inverse channel defined by $\mathcal{M}^{-1}_n(X) = (2^n+1)X-\mathds{I}$ for $n$-qubit Clifford unitaries or $\mathcal{M}^{-1}_n(X) = \otimes_i M_k^{-1}(X)$ for Pauli ($k=1$) for smaller $k$ qubit Cliffords. 

Meanwhile, consider a two-sided scheme where the input state is $\ket{0}$, followed by $\{U_i^{\dagger}\}$, process channel $\process$, $\{U_j\}$ and finally measurement in the computational basis shown in Fig.~\ref{fig:two_sided}. $U_j$, $U_i$ are also sampled uniformly from group $G$. The reconstruction formula for the Choi matrix is given as follows:

\begin{equation}
        \rho_{\Lambda}^{t} = \mathbb{E}_{U_j \sim G} \mathbb{E}_{U_i \sim G} \mathbb{E}_{b_j \sim P(j|[U_i^{\dagger}\ket{0}])} \ \mathcal{M}^{-1}_n((U_i^{\dagger}\ket{0}\bra{0}U_i)^T) \otimes \mathcal{M}^{-1}_n(U_j^{\dagger}\ket{b_j}\bra{b_j}U_j)
\end{equation} 
where $P(j|[U_i^{\dagger}\ket{0}])=\textup{tr}(\process(U_i^{\dagger}\ket{0}\bra{0}U_i) U_j^{\dagger}\ket{b_j}\bra{b_j}U_j)=2^n \textup{tr}(\rho_{\Lambda}((U_i^{\dagger}\ket{0}\bra{0}U_i)^T\otimes U_j^{\dagger}\ket{b_j}\bra{b_j}U_j))$. 
The last line comes from the Choi matrix identification. For any set $G$ that is right invariant with respect to any Pauli-X matrix such that $GX = G$, we have

\begin{equation}
        \rho_{\Lambda}^{t} = \mathbb{E}_{U_j \sim G} \mathbb{E}_{U_i \sim G} \mathbb{E}_{b_i \sim \frac{1}{2^n}} \mathbb{E}_{b_j \sim P(j|[U_i^{\dagger}\ket{b_i}])} \, \mathcal{M}^{-1}_n( (U_i^{\dagger}\ket{b_i}\bra{b_i}U_i)^T) \otimes \mathcal{M}^{-1}_n( U_j^{\dagger}\ket{b_j}\bra{b_j}U_j)
\end{equation}

This is true because $\ket{b_i} = \otimes^{\{\ell\}} X \ket{0}$ where $\{\ell\}$ are the locations where the measurement outcome is 1 and hence for any $U_i^{\dagger} \ket{0}$, there exists $U'_i$ from $G$ such that $U'_i \ket{b_i} = U_i \ket{0}$. Notice that this condition is true for random global Clifford measurement, random single qubit Clifford measurement, as well as Pauli-6 measurement. It follows that,

\begin{equation}
        \rho_{\Lambda}^{t} = \mathbb{E}_{U_j \sim G} \mathbb{E}_{U_i \sim G} \mathbb{E}_{b_j, b_i \sim P^{t}(i,j)} \  \mathcal{M}^{-1}_n( (U_i^{\dagger}\ket{b_i}\bra{b_i}U_i)^T) \otimes \mathcal{M}^{-1}_n( U_j^{\dagger}\ket{b_j}\bra{b_j}U_j) 
\end{equation}
where $P^{t}(i,j)=\frac{1}{2^n}P(j|[U_i^{\dagger}\ket{b_i}])=\textup{tr}(\rho_{\Lambda}((U_i^{\dagger}\ket{b_i}\bra{b_i}U_i)^T \otimes U_j^{\dagger}\ket{b_j}\bra{b_j}U_j))$. For Pauli-6 POVM, we have $(U_i^{\dagger}\ket{b_i}\bra{b_i}U_i)^T=U_i^{\dagger}\ket{b_i}\bra{b_i}U_i$, which implies $\rho^{t}_{\Lambda}=\rho^{a}_{\Lambda}$. For Clifford group $G$ and any fixed $\ket{b}$, we have $\{U_i^{\dagger}\ket{b}\bra{b}U_i|U_i \in G\}=\{(U_i^{\dagger}\ket{b}\bra{b}U_i)^T |U_i \in G\}$, which also implies $\rho^{t}_{\Lambda}=\rho^{a}_{\Lambda}$. 

We remark that the above argument holds for a single repetition measurement. For multiple repetition measurement with respect to one $U_i$, the equivalence breaks because multiple $\ket{b_i}$ are generated with respect to one $U_i$ in the ancilla-based scheme while the two-sided scheme always has input $\ket{0}$ fixed for one $U_i$.

\section{Details on Shadow Measurements}\label{app:shots}

We present the number of random unitary circuits used for both Clifford and Pauli measurements. For those with unitary entries with *, all possible unitaries have been used.

\begin{table}[h!]
    \centering
    \begin{tabular}{|c|c|c|c|}
    \hline 
    Qubits & Type & Unitaries & Repetitions/Unitary (Total)\tabularnewline
    \hline 
    \hline 
    \multirow{2}{*}{2} & Pauli & 81{*} & $\approx 632$ (51200)\tabularnewline
    \cline{2-4} \cline{3-4} \cline{4-4} 
     & Clifford & 1024 & 50 (51200)\tabularnewline
    \hline 
    \multirow{2}{*}{3} & Pauli & 729{*} & $\approx 70$ (51200)\tabularnewline
    \cline{2-4} \cline{3-4} \cline{4-4} 
     & Clifford & 1024 & 50 (51200)\tabularnewline
    \hline 
    \multirow{2}{*}{4} & Pauli & 1024 & 50 (51200)\tabularnewline
    \cline{2-4} \cline{3-4} \cline{4-4} 
     & Clifford & 1024 & 50 (51200)\tabularnewline
    \hline 
    \end{tabular}
\end{table}

\section{Comparison between Median-of-Means and Mean}
\begin{figure}[ht]
    \centering
    \includegraphics[width=0.5\columnwidth]{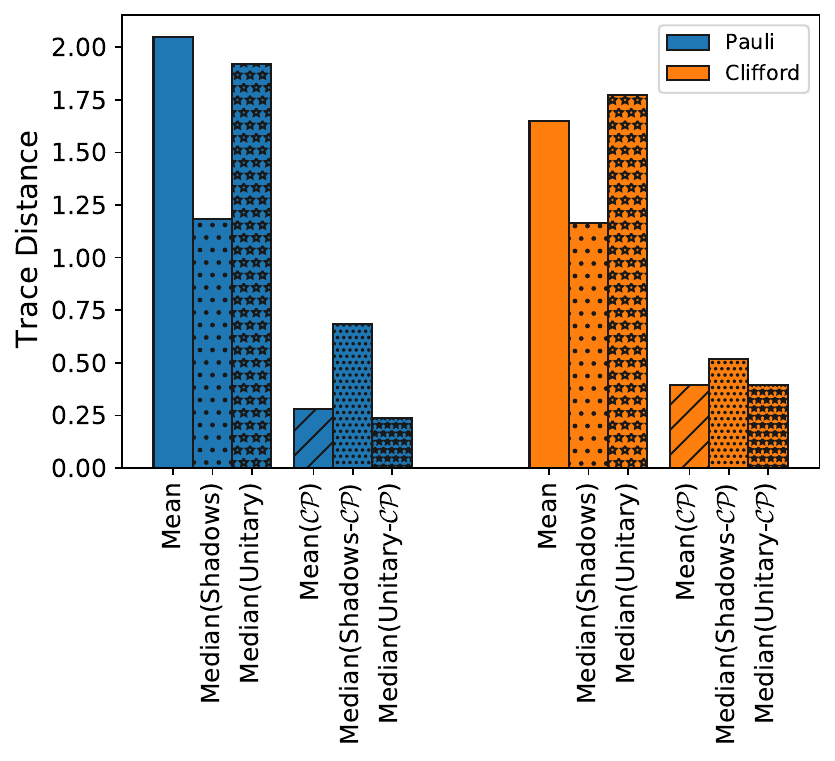}
    \caption{Normalized trace distance $T(\Lambda,\Lambda^O)$ between the unitary Choi matrix $\Lambda$ and a Pauli/Clifford shadow reconstructed Choi matrix $\Lambda^O$ for the $n=3$ GHZ process. The construction is either via mean, median of means via randomizing and batching individual shadows (with many derived from the same unitary), or median of means for each unique unitary (which has 50/$\approx$70 repetitions each). We include the effects of $\mathcal{CP}$ projection on the resulting Choi matrices for comparison.  }
    \label{fig:median_vs_mean}
\end{figure}

In Ref.~\cite{Huang2020}, the authors suggest a median-of-means procedure for the classical shadows, or more precisely taking the median of $K$ averaged shadows. Then in Ref.~\cite{struchalin2021experimental} they show that most of their results are unaffected by using the direct mean over a median-of-means. 

To understand the role of the procedure in ShadowQPT, we first distinguish between two median-of-means procedures. The first a median on the `shadow' level: take each classical shadow, many of which are from the same unitary, and perform the median-of-means procedure using this data set. An alternative procedure, is to first average the repetitions corresponding to each unitary, then obtain the median by using sets of $K$ averaged unitaries.

In Fig.~\ref{fig:median_vs_mean} is shown the results of using the mean and the two median-of-mean procedures for $n=3$ data, with a batch size of $K=23$. While using a median-of-means at the shadow level does show a slight improvement for both Pauli and Clifford measurements  without projection, after projection we find that there is little difference between the mean and median-of-means results. Given the weak dependence on the differences after post-processing and the potential introduction of a tunable variable ($K$), we chose to utilize the mean for our computing results of the classical shadows.

\section{3D Choi matrix plots}
 \begin{figure*}
     \centering
     \subfloat[]{\includegraphics[width=\columnwidth]{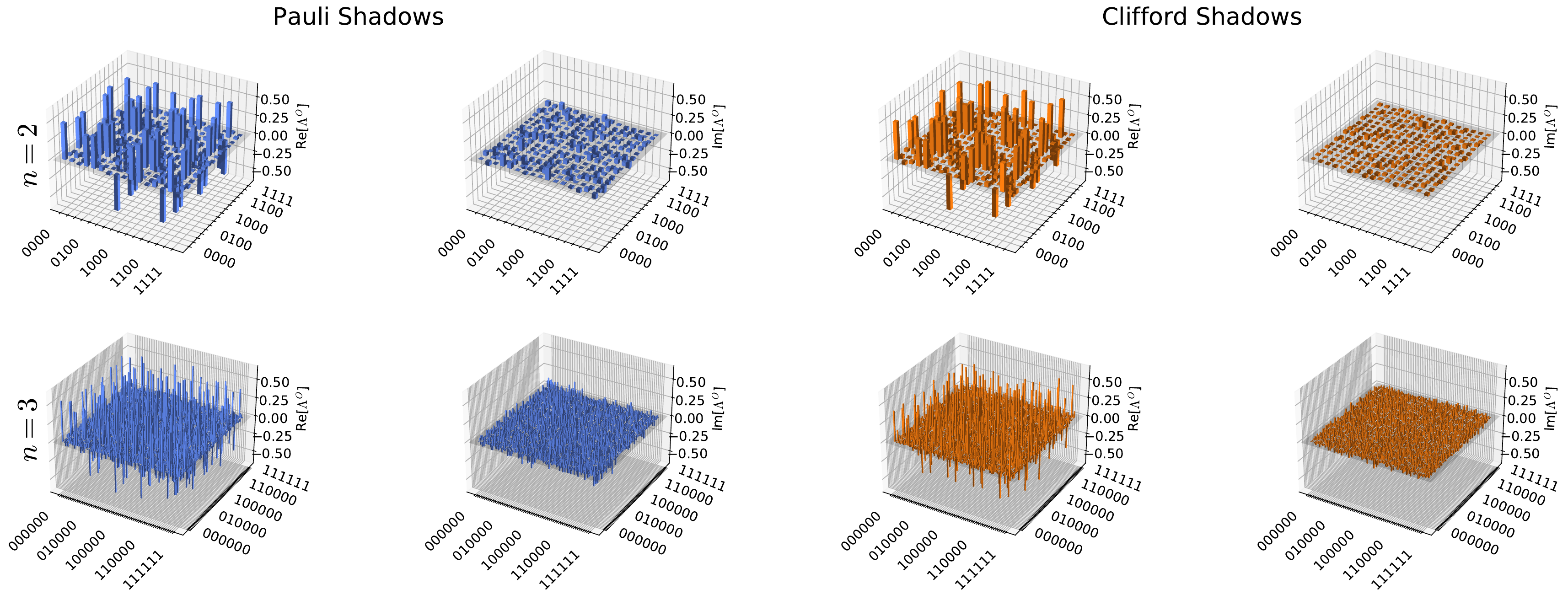}\label{fig:city}}
     \\
     \subfloat[]{\includegraphics[width=\columnwidth]{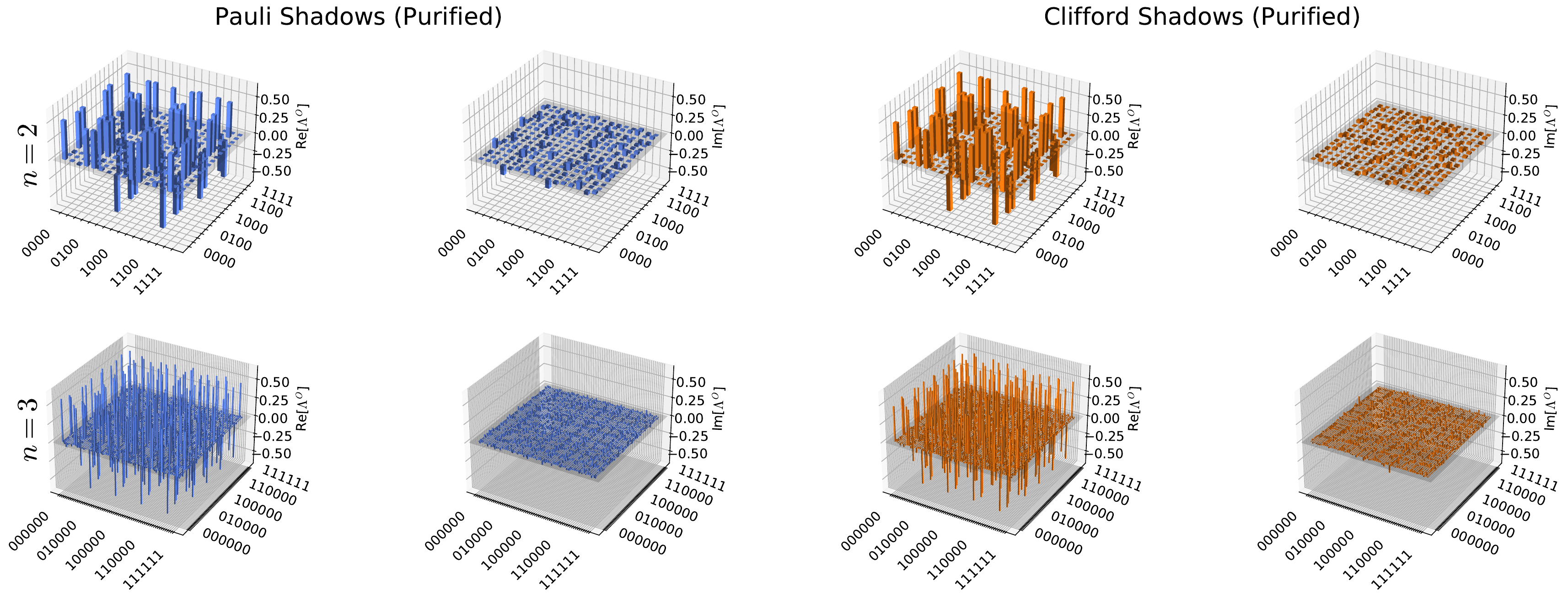}\label{fig:city_pure}}
       
     \caption{(a) Choi matrix visualization using Pauli (left two) and Clifford (right two) shadows, separated into real and imaginary components. 
     (b) Purification of matrices in (a). }
     \label{fig:dual_city_plots}
 \end{figure*}

In Fig.~\ref{fig:dual_city_plots} we present a 3D representation of $n=2,3$ in Fig.~\ref{fig:process_visualization}. The axes are shown enumerated in binary for the Choi matrices $4^n$ index values.

\section{Post-processing dependence}
\begin{figure}
    \centering
    \includegraphics[width=0.5\columnwidth]{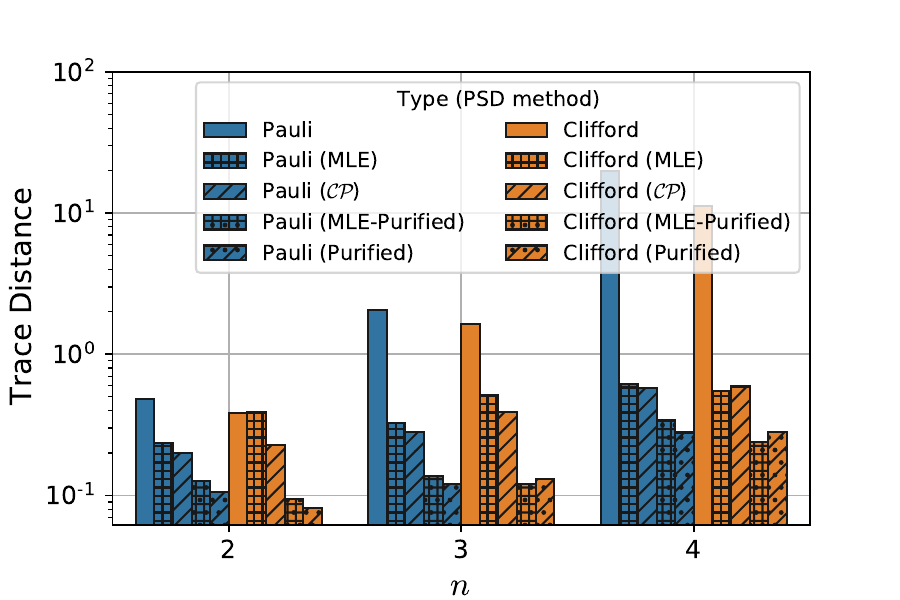}
    \caption{Normalized trace distance $T(\Lambda,\Lambda^O)$ between the unitary Choi matrix $\Lambda$ and a Pauli/Clifford shadow reconstructed Choi matrix $\Lambda^O$. Included are two projections of the $\Lambda^O$ into the space of positive semidefinite (PSD) matrices, the iterative maximum-likelihood estimator (MLE) method of Ref.~\cite{Lvovsky_2004} and the eigenvalue rescaling method of Ref.~\cite{Smolin2012}, as well as the corresponding purification.  }
    \label{fig:full_trace_distance_bar}
\end{figure}

\begin{figure}
    \centering
    \subfloat[]{\includegraphics[width=0.5\columnwidth]{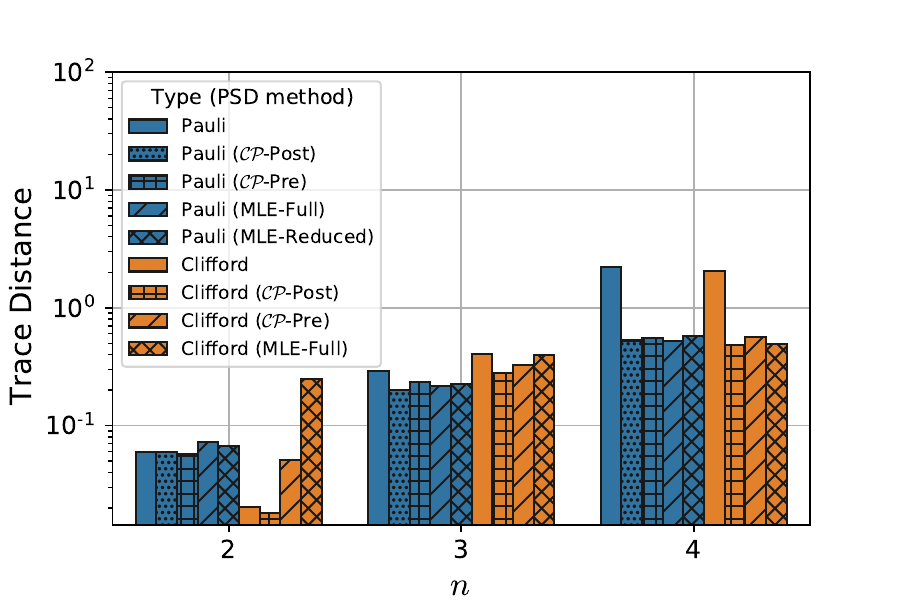} }
    \subfloat[]{\includegraphics[width=0.5\columnwidth]{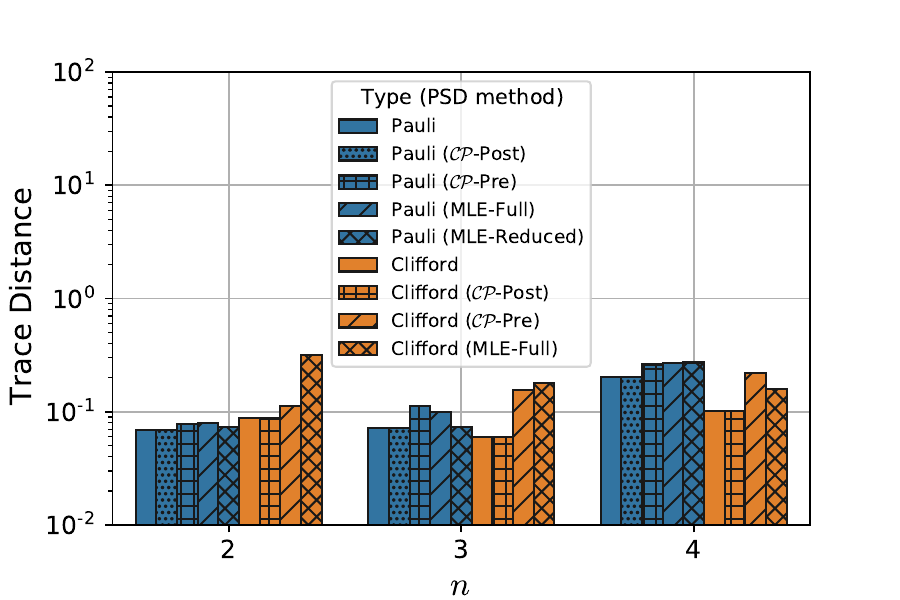}}
    \caption{ Normalized trace distance $T(\Lambda_B,\Lambda^O_B)$ of the subsystem of the first qubit  ($k=1$) (a) and (b) the remaining $n-1$ qubits ($k=n-1$)  between the unitary Choi matrix $\Lambda_B$ and a Pauli/Clifford shadow reconstructed Choi matrix $\Lambda^O_B$. }
    \label{fig:reduced_trace_distance_specific}
\end{figure}

\begin{figure}
    \centering
    \subfloat[]{\includegraphics[width=0.53\columnwidth]{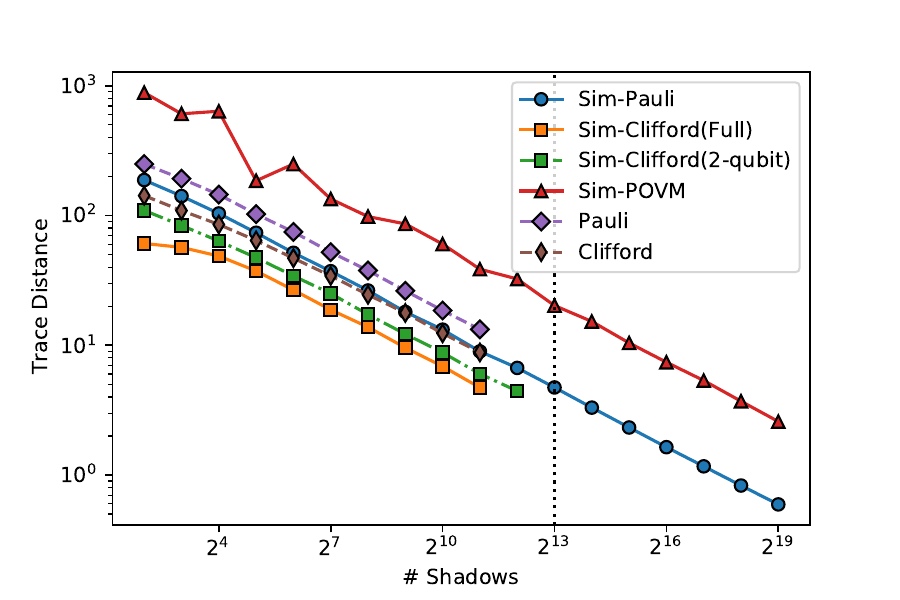} }
    \subfloat[]{\includegraphics[width=0.47\columnwidth]{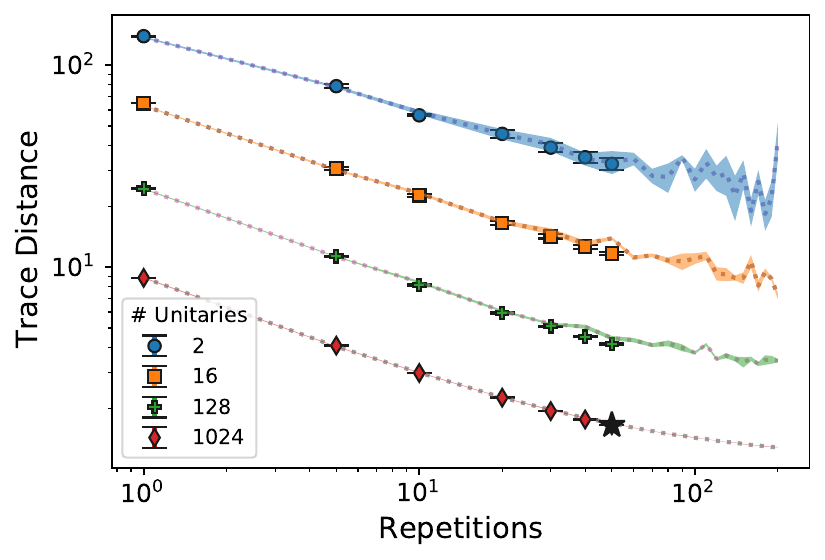}}
    \caption{Left (a) Normalized trace distance $T(\Lambda,\Lambda^O)$ between the unitary Choi matrix $\Lambda$ and various simulated reconstruction methods. Simulated data includes a prefix of ``Sim". The vertical line represents saturation of all Pauli strings.
    Right (b) Normalized trace distance $T(\Lambda,\Lambda^O)$ between the unitary Choi matrix $\Lambda$ and a Clifford shadow reconstruction from resampled data (Markers) on top of simulated data (dotted lines with shading). A star marks the full IonQ dataset; errorbars represent an average of 5 resampling trials. Resampled IonQ data consists of equal fixed and random Clifford orderings.  }
    \label{fig:trace_distance_vs_shots}
\end{figure}

In order to show the dependence on post-processing method, Fig.~\ref{fig:full_trace_distance_bar} shows the normalized trace distance between MLE reconstruction, $\mathcal{CP}$-projection, and purification. Overall the results are roughly consistent across system size $n$, in that purified shadows are competitive with purification of (Clifford) MLE results. Just considering $\mathcal{CP}$ projection and standard MLE, Pauli measurements have the minimal trace distance.

\section{Repetition dependence on trace distance}

Next we compare the dependence on the trace distance with the number of unitaries and classical shadows for $n=3$ in Fig.~\ref{fig:trace_distance_vs_shots}. 

Second, we resample the Clifford measurements from the IonQ alongside simulated data, computing the normalized trace distance in Fig.~\ref{fig:trace_distance_vs_shots}b. We find that the measured data tracks the simulation very closely for a number of chosen unitary Clifford measurements. Although we predict performance could be increased by increasing the number of repetitions/circuit (i.e. we had not reached instability in trace distance seen in the low unitary counts, the blue and orange lines of Fig.~\ref{fig:trace_distance_vs_shots}b), the number of individual circuits and repetitions/circuit chosen appears to be reasonable in the study of our method.  Although there is an instability in the regime where the number of repetitions is greater than the number of unitary, the trace distance remains improved compared to a single measurement.

\section{Distance using the Frobenius norm }

We provide additional plots of Figs.~\ref{fig:full_trace_distance},\ref{fig:two_sided_plots}, \ref{fig:reduced_trace_distance_1qubit}, and \ref{fig:reduced_trace_distance_2qubits} using a Frobenius norm distance shown in Fig.~\ref{fig:l2_norm_version} and \ref{fig:two_sided_plots_l2}. We find nearly the same qualitative behavior for both distance measures. This is unsurprising as, at least for the Pauli case, there is similar scaling bounds of $\log(n)$ for fixed $k$ (see Theorem.~\ref{thm:k-qubit}). 

\begin{figure}
    \centering
    \subfloat[]{\includegraphics[width=0.5\columnwidth]{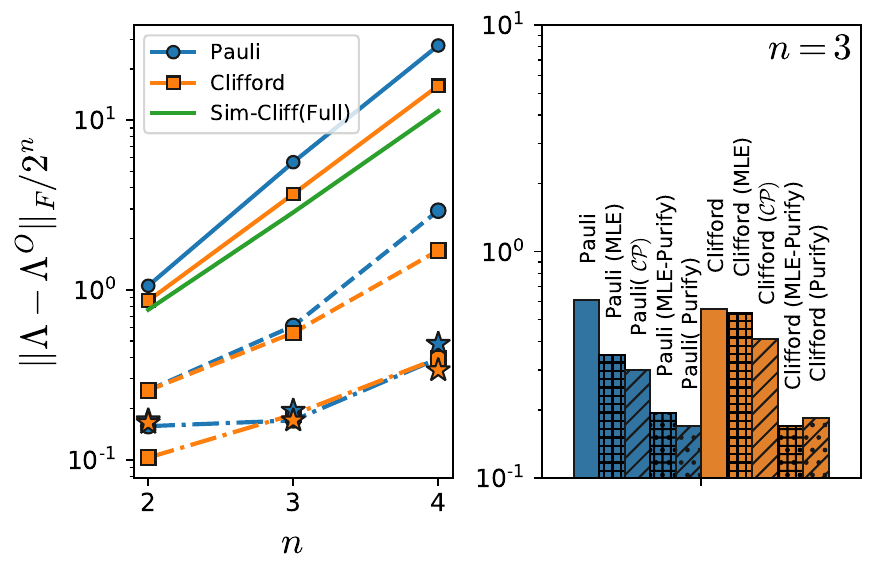} }\\
    \subfloat[]{\includegraphics[width=0.5\columnwidth]{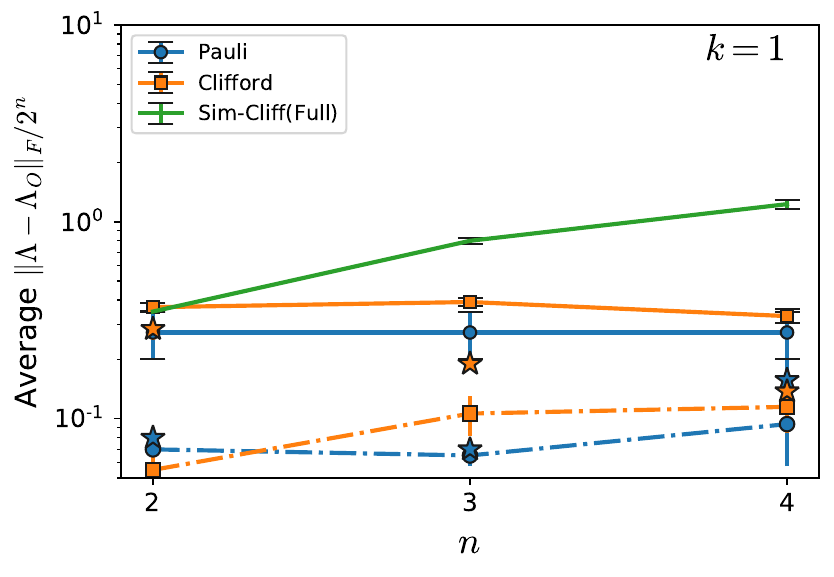}}
    \subfloat[]{\includegraphics[width=0.5\columnwidth]{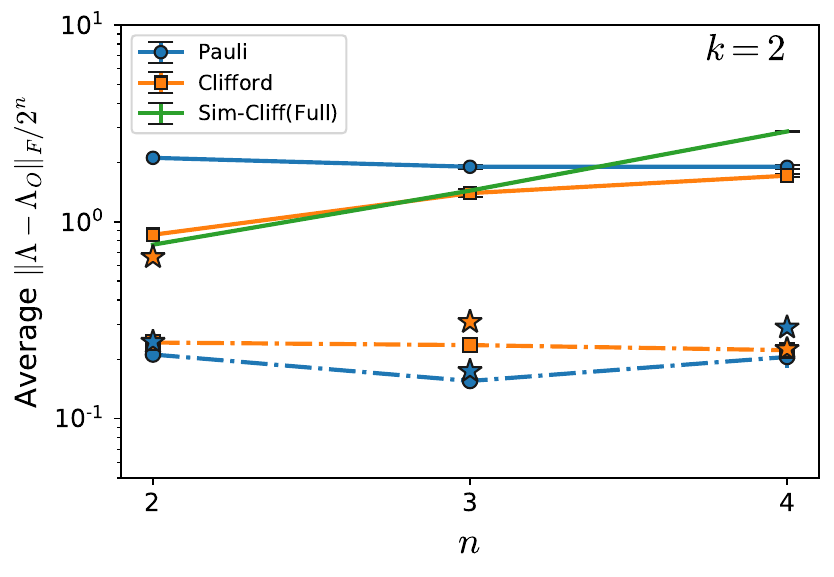}}
    \caption{(a) Left: Normalized Frobenius norm $\Vert \Lambda-\Lambda^O\Vert_F/2^n$ between the unitary Choi matrix $\Lambda$ and a Pauli/Clifford shadow reconstructed Choi matrix $\Lambda^O$. Solid lines represent single repetition shadows with 512 unitaries on IonQ for Pauli/Clifford measurements and in green, simulated $2n$-Clifford circuits; dotted lines includes all data/repetitions collected; dot-dashed lines are purified. The corresponding purified MLE results are shown in blue/orange stars for Pauli/Clifford measurements respectively.
    Right: Frobenius norm distance for various post-processing for $n=3$. The projection/post-processing method is shown in parenthesis above the bar.
    (b) $k=1$ qubit and (c) $k=2$ qubits reduced process reconstruction between the unitary Choi matrix $\Lambda_B$ and a Pauli/Clifford shadow reconstructed Choi matrix $\Lambda^O_B$ using the Frobenius norm. 
    Solid lines represent single repetition shadows with 512 unitaries (including simulated $2n$-Clifford circuits), dotted lines includes all data/repetitions, and dot-dashed lines are $\mathcal{CP}$-projected after the partial trace. The corresponding MLE results are shown in blue/orange stars for Pauli/Clifford measurements respectively. Pauli MLE is done directly in the reduced problem space of $k$ qubit reconstruction.
    }
    \label{fig:l2_norm_version}
\end{figure}
\begin{figure}
    \centering
    \includegraphics[width=\columnwidth]{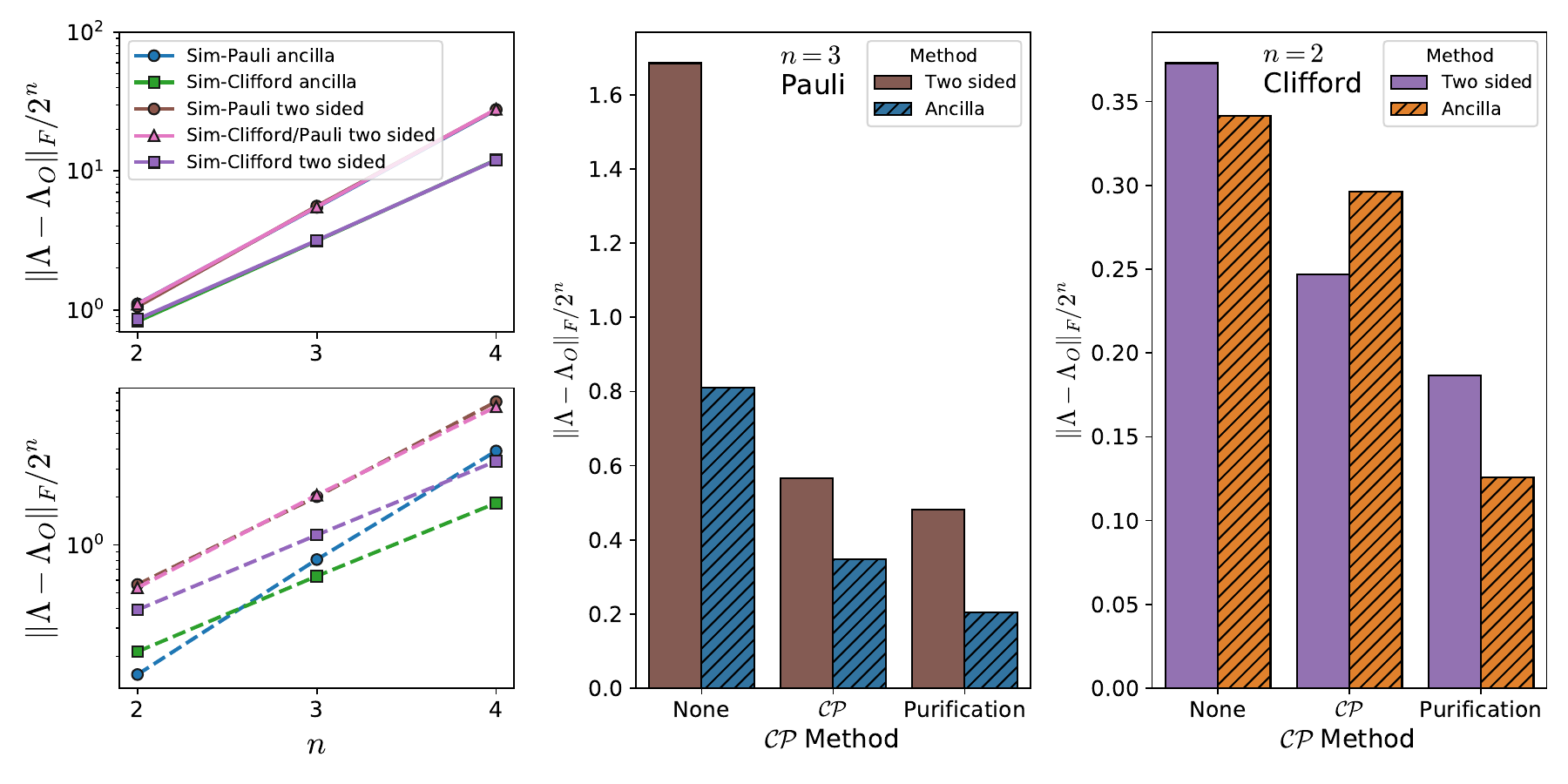}
    \caption{\textit{Left:} Simulated  Frobenius norm $\Vert \Lambda-\Lambda^O\Vert_F/2^n$ scaling for 512 unitaries with a single repetition (top) and 50 repetitions (bottom) between the unitary Choi matrix $\Lambda$ and a reconstruction via Pauli or Clifford ancilla based ShadowQPT or two sided ShadowQPT. `Sim-Clifford/Pauli' corresponds to single qubit Clifford ($k=1$) for $U_L$ and Pauli measurements on the right, which is nearly identical to Pauli two sided measurements. Ancilla results are nearly identical to the two sided results with a single repetition, but with multiple repetitions ancilla simulations (squares) have lower trace distance than their two sided counterparts (circles). \textit{Middle:} Normalized trace distance between (measured) Pauli two-sided ShadowQPT and ancilla ShadowQPT with 512 unitaries and 50 repetitions for a 3 qubit GHZ process. \textit{Right:}   Normalized Forbenius norm between (measured) Clifford two-sided ShadowQPT and ancilla ShadowQPT with 512 unitaries and 50 repetitions for a 2 qubit GHZ process. The ancilla Cliffords are decomposed into $k=2$ as in Fig.~\ref{fig:l2_norm_version} .}
    \label{fig:two_sided_plots_l2}
\end{figure}
\section{Details on Overlap Estimation}

To illustrate Theorem.~\ref{thm:overlap}, we compute $\textup{tr}[\Lambda^\prime(\rho^{in\,T}\otimes\sigma(\theta))]$ for several sets of $\rho$ and $\sigma(\theta)$. The input to the channel, $\rho^{in}$, can be represented by an initial state of which has three categories: $\ket{0}^{\otimes n}$, $\ket{+i}^{\otimes n}$, and $\otimes_j R_x(\phi_j)\ket{0}^{\otimes n}$ states\footnote{where $R_x(\phi_j)$ acts on the $j$-th qubit and $\phi_j=\{0.1717,0.1234,0.9876,0.888\}$}. The first two states are already measured within the Choi matrix, and are effectively measuring the reconstruction of the channel process. After running the circuit in Fig.~\ref{fig:overlap_circuit}, we measure $\ket{0}^{\otimes n}$ or $P(\ket{0}^{\otimes n})$ to obtain the value of the overlap. We only display $\ket{0}^{\otimes n}$ and $\otimes_j R_x(\phi_j)\ket{0}^{\otimes n}$ states in the main work.

We compute the average overlap error in Fig.~\ref{fig:overlaps_appendix} compared to the trace circuit measurements performed on the IonQ device. We use 51 pairs for each $\rho^{in}$ for the shadow reconstructed data  iterating uniformly $\theta\in[0,2\pi]$ and $\approx 5-8$ pairs for each $\rho^{in}$.
Error bars represent the standard error over the overlap dataset. 
We see that our reconstructed shadow data has similar trends to the trace circuit measurement, with a significant range of values.
With purification, we can significantly decrease the error to the noiseless case even compared to the trace circuit, particularly in the $n=4$ case.

\begin{figure}
    \centering
     \includegraphics[width=\columnwidth]{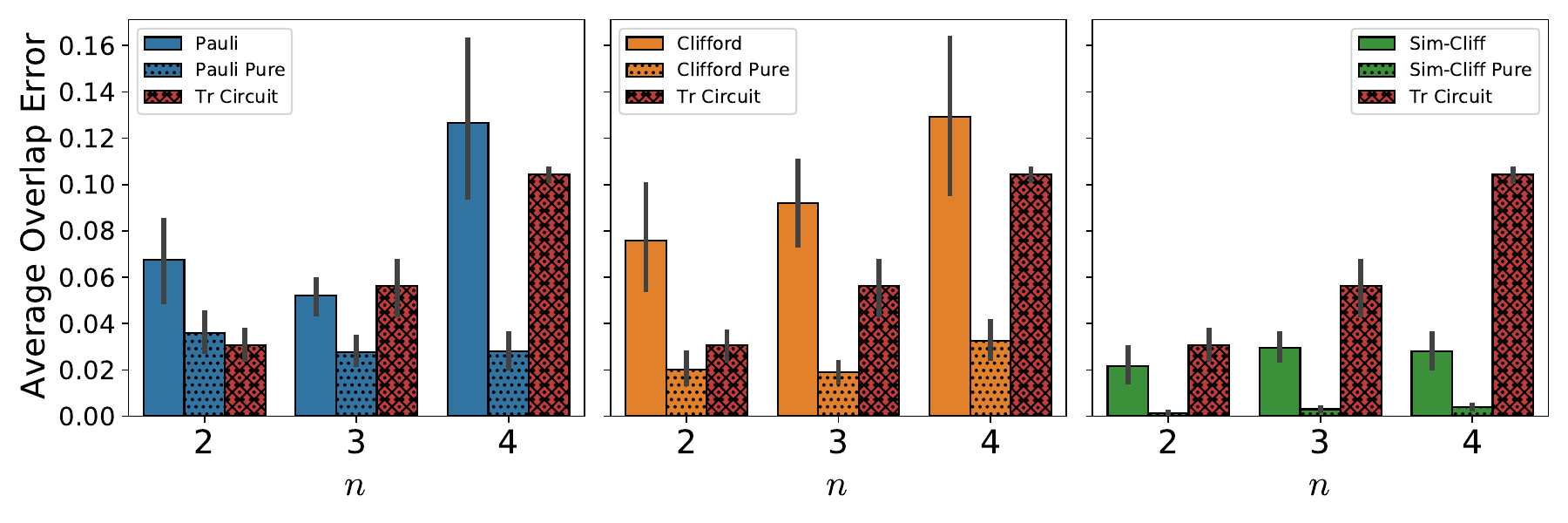}
   
    \caption{ Average overlap error $\langle \Delta\rangle = \langle|w_i^O-w_i|\rangle_i$ for Pauli reconstruction (left), Clifford (middle) and Simulated Clifford data (right) for systems of size $n=2,3,4$. We average over 3 different $\rho^{in}$ and 4 different $\sigma$ each with 51 different angles. For comparison we include trace circuit measurements (Tr circuit) performed on the IonQ using the data shown in Fig.~\ref{fig:overlaps_all}.  }
    
    \label{fig:overlaps_appendix}
\end{figure}
 
Purity of the channel and GHZ state can also be calculated. The Bell-Basis Algorithm (BBA) swap test \cite{Cincio2018} provides an algorithm to determine the overlap between two general quantum states. We compute the overlap of two GHZ states $\textup{tr}[\rho_{GHZ_n}^2]$ where $\rho_{GHZ_n}=\process(\ket{0}\bra{0}^{\otimes n})$ using the BBA algorithm, alongside computing the shadow reconstructed $\textup{tr}[\Lambda^2]/4^n$ in Fig.~\ref{fig:purity}. When measuring the purity, we find a much higher value than our shadow reconstructions. A noiseless quantum computer should produce unit purity for the full GHZ process, these large measurements further motivate our choice to use purification as a post-processing technique. 
Note that purity has a strong dependence on post-processing techniques.

\begin{figure}
    \centering
   
     \includegraphics[width=0.5\columnwidth]{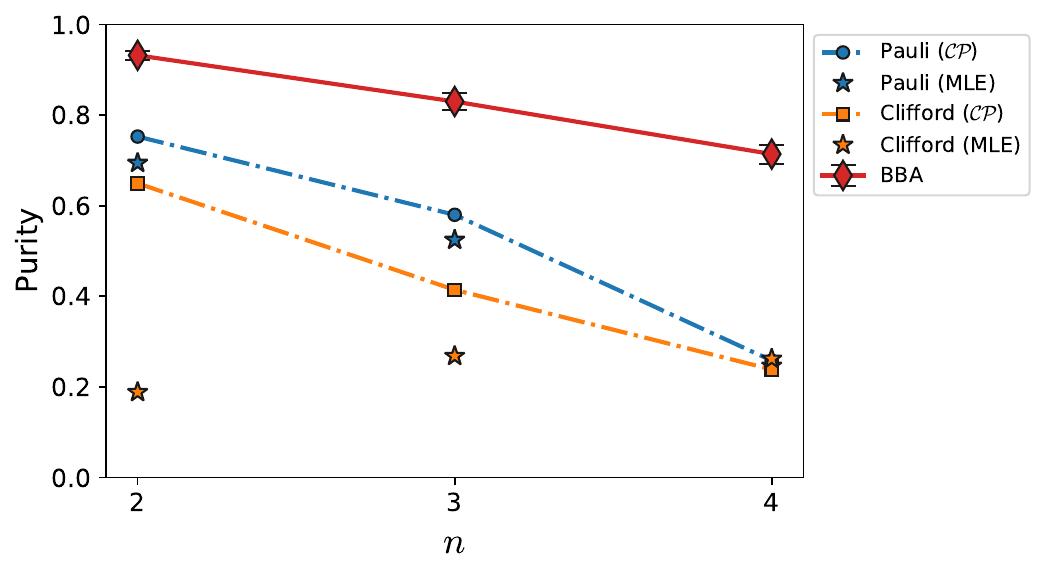}
   
    \caption{Normalized Purity measurement $\textup{tr}[\Lambda^2]/4^n$ of the  $\mathcal{CP}$-projected Pauli/Clifford measured Choi matrices as well as the MLE reconstructions shown as a star. We use the BBA algorithm run directly on the IonQ with 1000 repetitions shown in diamonds to compute $\textup{tr}[\rho_{GHZ_n}^2]$ where $\rho_{GHZ_n}=\process(\ket{0}\bra{0}^{\otimes n})$; error bars are done by batching the measurements into 10 sets.}
    
    \label{fig:purity}
\end{figure}

\section{Hamiltonian Learning Optimal Error Fits}

In Fig.~\ref{fig:app_ham_learning_slice} we show for a fixed number of Pauli measurements (shadows) $N$ the average error vs $t$ of the unitary operator $\exp(-iHt)$. These points have been simulated and averaged over the same 10 trials as in Fig.~\ref{fig:ham_learning_sim}. For a given value of $t$, because of the linear approximation to $\exp(-iHt)$, the error of which is shown in blue, grows as $O(t^2)$ and is the minimum achievable error for a given $t$. ShadowQPT reconstructed couplings $\tilde{c}_i$ are inherently measuring $c_i^{renorm}$ however, with an error as $O(1/t)$ shown in green. Thus as these two approach, the true absolute error between $\tilde{c}_i$ and $c_i$ is shown in orange, which mostly follows either the linear approximation error or renormalized error except near their crossing. The minimum error can then be found for a given $t$, shown in a dashed grey line. We additionally fit this line for $N=1000,1000000$ which are not shown. 

\begin{figure}
    \centering
   
     \includegraphics[width=\columnwidth]{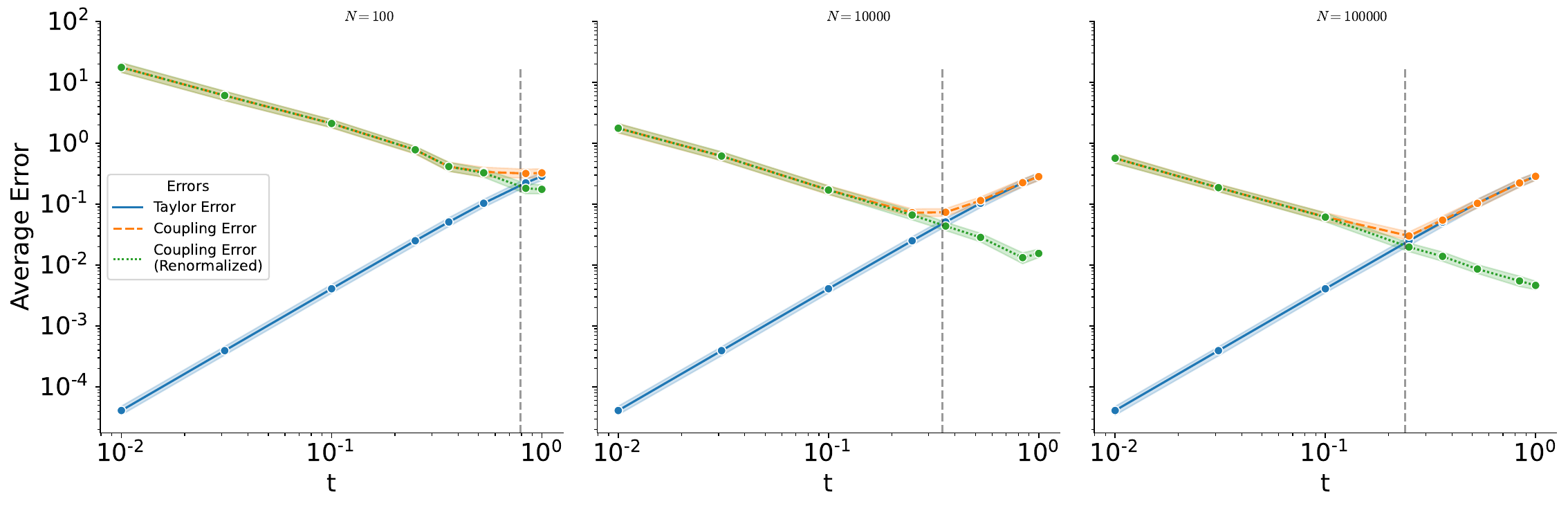}
   
    \caption{a) Hamiltonian learning simulation results for a 1D transverse field Ising model with $n$ sites and random couplings between $[-1,1]$. We average over 10 disorder realizations and use $N=100, 10000,$ and $100000$ random Pauli measurements respectively with no additional post processing. The average error is given by average absolute error $\langle |b_i-c_i|\rangle$ to the original Hamiltonian coupling $c_i$, where $b_i$ is either $c^{renorm}_i$ the renormalized couplings,  $\tilde{c}_i$ the ShadowQPT reconstructed couplings, or $\tilde{c}_i+c_i-c^{renorm}_i$ or the error between ShadowQPT and the renormalized couplings. A dashed line shows the optimal error where the linear approximation error is approximately the same as the renormalized error. }
    
    \label{fig:app_ham_learning_slice}
\end{figure}
\end{document}